\def\MdN{\ensuremath{\mathbb{N}}}
\newcommand{\Oh}[1]{\mathcal{O}\!\left( #1\right)}
\newcommand{\Otilde}[1]{\tilde{\mathcal{O}}\!\left( #1\right)}
\newcommand{\Is}{:=}
\newcommand{\etal}{et~al.\ }
\newcommand{\ie}{i.e.\ }
\newcommand{\CC}{C\texttt{++}}
\newcommand{\papertitle}{Practical Fully Dynamic Minimum Cut Algorithms}
\newif\ifVLDB
\newif\ifEnableExtend
\newif\ifDoubleBlind 
\let\oldReturn\Return
\renewcommand{\Return}{\State\oldReturn}
\newtheorem{lemma}{Lemma}
\title{\papertitle}
\author{Double Blind}{Double Blind University}{}{}{}
\authorrunning{Double Blind}
\else\author{Monika Henzinger}{University of Vienna, Faculty of Computer Science, Vienna, Austria}{monika.henzinger@univie.ac.at}{0000-0002-5008-6530}{}
\author{Alexander Noe}{University of Vienna, Faculty of Computer Science, Vienna, Austria}{alexander.noe@univie.ac.at}{0000-0002-4711-3323}{}
\author{Christian Schulz}{Heidelberg University, Heidelberg, Germany}{christian.schulz@informatik.uni-heidelberg.de}{0000-0002-2823-3506}{}
\authorrunning{M. Henzinger, A. Noe and C. Schulz}
\keywords{Minimum Cut, Graph Algorithm, Algorithm Engineering, Cut Enumeration, Balanced Cut, Global Minimum Cut, Large-scale Graph Analysis}
\begin{document}

\ifVLDB
\title{\papertitle}
\numberofauthors{3}
\author{
\alignauthor
Monika Henzinger\\
       \affaddr{University of Vienna}\\
       \affaddr{Vienna, Austria}\\
       \email{monika.henzinger@\\univie.ac.at}
\alignauthor
Alexander Noe\\
       \affaddr{University of Vienna}\\
       \affaddr{Vienna, Austria}\\ 
       \email{alexander.noe@\\univie.ac.at}
\alignauthor 
Christian Schulz \\
       \affaddr{Heidelberg University}\\
       \affaddr{Heidelberg, Germany}\\
       \email{christian.schulz@\\informatik.uni-heidelberg.de}
}
\date{30 October 2020}
\fi{}

\maketitle

\begin{abstract}
We present a practically efficient algorithm for maintaining a global minimum cut in large dynamic graphs under both edge insertions and deletions. While there has been theoretical work on this problem, our algorithm is the first implementation of a fully-dynamic algorithm. The algorithm uses the theoretical foundation and combines it with efficient and finely-tuned implementations to give an algorithm that can maintain the global minimum cut of a graph with rapid update times. We show that our algorithm gives up to multiple orders of magnitude speedup compared to static approaches both on edge insertions and deletions.
\end{abstract}


\section{Introduction}

We consider the problem of maintaining a (global) \emph{minimum cut} of a graph
under \emph{edge insertions and deletions}, also known as the
\emph{fully-dynamic minimum cut problem}. A \emph{minimum cut} in a weighted graph in a
graph is a partition of the vertices into two sets so that the total weight of
edges connecting the sets is minimized. In the fully
dynamic setting, the algorithm has to process a sequence of edge insertions and
deletions and has to be able to return a minimum cut at any point in this sequence.

The minimum cut problem has applications in many fields, such as network reliability~\cite{karger2001randomized,ramanathan1987counting}, VLSI design~\cite{krishnamurthy1984improved}, graph drawing~\cite{kant1993algorithms}, as a subproblem in the branch-and-cut algorithm for solving the travelling salesperson problem and other combinatorial problems~\cite{padberg1991branch}, and as a subproblem in connectivity-based data reductions for problems such as cluster editing~\cite{Boecker2011}. Most real-world networks are continuously changing and evolving~\cite{demetrescu2010dynamic,eppstein1998dynamic,zaki2016comprehensive} and thus, dynamic algorithms that maintain a solution for a changing graph are of utmost importance for large-scale graph applications.

There has been a large body of research for the static minimum cut problem
starting in 1961~\cite{gomory1961multi}. The randomized algorithm of
Karger~\cite{karger2000minimum} with a running time of $\Oh{m \log^3{n}}$ is the
first algorithm with a quasi-linear running time. Kawarabayashi and
Thorup~\cite{kawarabayashi2015deterministic} give the first deterministic
quasi-linear algorithm, later improved by
Henzinger~\etal\cite{henzinger2017local} to a running time of $\Oh{m \log^2{n}
\log \log^2 n}$, which is the fastest deterministic minimum cut algorithm for
unweighted graphs.
Nagamochi~\etal\cite{nagamochi1992computing,nagamochi1994implementing} give an
algorithm for the minimum cut problem, which is based on edge contractions
instead of maximum flows. Their algorithm has a worst case running time of
$\Oh{nm+n^2\log{n}}$ but performs far better in practice on many graph
classes~\cite{Chekuri:1997:ESM:314161.314315,junger2000practical,henzinger2018practical}.
Gawrychowski~\etal\cite{gawrychowski2020minimum} give a randomized algorithm
that finds a minimum cut in an undirected weighted graph $G$ with high
probability in $\Oh{m\log^ 2 n}$ time, which is currently the fastest asymptotic
running time for the static minimum cut problem. Mukhopadhyay and
Nanongkai~\cite{DBLP:conf/stoc/MukhopadhyayN20} give a randomized algorithm with
running time $\Oh{m \frac{\log^2{n}}{\log \log n} + n \log^6 n}$. Li and
Panigrahi~\cite{li2020maxflows} give a deterministic algorithm that runs in time
$\Oh{m^{1+\epsilon}}$ plus polylog$(n)$ maximum flow computations for any
constant $\epsilon > 0$. Recently, Li~\cite{li2020deterministic} gave a
deterministic algorithm with running time
$\Oh{m^{1+\epsilon}}$ for any constant $\epsilon > 0$.

In the field of dynamic graph algorithms,
Henzinger~\cite{henzinger1995approximating} gives the first incremental minimum
cut algorithm, which maintains the exact minimum cut with an amortized time of
$\Oh{\lambda \log{n}}$ per edge insertion, where $\lambda$ is the value of the
minimum cut. Goranci~\etal\cite{goranci2018incremental} manage to remove the
dependence on $\lambda$ from the update time and give an incremental algorithm
with $\Oh{\log^3{n} \log \log^2{n}}$ amortized time per edge insertion. Both
algorithms maintain a compact data structure of all minimum cuts called
\emph{cactus graph} and invalidate minimum cuts whose weight was increased due
to an edge insertion. If there are no remaining minimum cuts, they recompute all
minimum cuts from scratch. For minimum cut values up to polylogarithmic size,
Thorup~\cite{thorup2007fully} gives a fully dynamic algorithm with
$\Otilde{\sqrt{n}}$ worst-case running time. Note that all of these algorithms
are limited to unweighted graphs. The algorithm of Thorup is based on greedy
tree packings using top trees.
Implementations~\cite{Chekuri:1997:ESM:314161.314315,DBLP:conf/swat/BhardwajLS20}
of static greedy tree packing algorithms give experimental results that are
significantly slower than implementations of minimum cut algorithms based on
edge contraction~\cite{nagamochi1992computing,nagamochi1994implementing} or
maximum flows~\cite{hao1992faster} on similar
graphs~\cite{Chekuri:1997:ESM:314161.314315,junger2000practical,henzinger2018practical}.

For \emph{planar} graphs with arbitrary
edge-weights, {\L}{\k{a}}cki and Sankowski~\cite{lkacki2011min} give a
fully-dynamic algorithm with $\Oh{n^{5/6}\log^{5/2}{n}}$ time per update. To the
best of our knowledge, there exists no implementation of any of these dynamic
algorithms. 

An important subproblem for many dynamic minimum cut algorithms is finding all minimum cuts. Even though a graph can have up to $\binom{n}{2}$ minimum cuts~\cite{karger2000minimum}, there is a compact representation of all minimum cuts of a graph called \emph{cactus graph} with at most $\Oh{n}$ vertices and edges. A cactus graph is a graph in which each edge belongs to at most one simple cycle. Karzanov and Timofeev~\cite{karzanov1986efficient} give the first polynomial time algorithm to construct the cactus representation for all minimum cuts. Picard and Queyranne~\cite{picard1980structure} show that all minimum s-t cuts, \ie minimum cuts separating two specified vertices can be found from a maximum flow between them. Thus, similar to the classical algorithm of Gomory and Hu~\cite{gomory1961multi} for the minimum cut problem, we can find all minimum cuts in $n-1$ maximum flow computations. The algorithm of Karzanov and Timofeev~\cite{karzanov1986efficient} combines all those minimum cuts into a cactus graph representing all minimum cuts. Nagamochi and Kameda~\cite{nagamochi1994canonical} give a representation of all minimum cuts separating two vertices $s$ and $t$ in a so-called $(s,t)$-cactus representation. Based on this $(s,t)$-cactus representation, Nagamochi~\etal\cite{nagamochi2000fast} give an algorithm that finds all minimum cuts and gives the minimum cut cactus in $\Oh{nm + n^2 \log{n} + n^*m\log{n}}$, where $n^*$ is the number of vertices in the cactus.

Karger and Stein~\cite{karger1996new} give a randomized algorithm to find all
minimum cuts in $\Oh{n^2 \log^3{n}}$ time by contracting random edges. Based on
the algorithm of Karzanov and Timofeev~\cite{karzanov1986efficient} and its
parallel variant given by Naor and Vazirani~\cite{naor1991representing} they
show how to give the cactus representation of the graph in the same asymptotic
time. Ghaffari~\etal\cite{ghaffari2019faster} gave an algorithm that
finds all \emph{non-trivial minimum cuts} of a simple unweighted graph in $\Oh{m
\log^2{n}}$ time. Using the techniques of Karger and Stein the algorithm can
trivially give the cactus representation of all minimum cuts in $\Oh{n^2
\log{n}}$. Recently, we developed an algorithm that
efficiently computes all minimum cuts of very large graphs in
practice~\cite{henzinger2020finding}. This algorithm combines various data
reductions with an efficient implementation of
the algorithm of Nagamochi~\etal\cite{nagamochi2000fast} and can find all
minimum cuts in graphs with up to billions of edges and millions of minimum cuts
in a few minutes.

\paragraph*{Our Results}

In this paper, we give the first implementation of a \emph{fully-dynamic
algorithm} for the \emph{minimum cut problem} in a weighted graph. Our algorithm maintains an exact
global minimum cut under edge insertions and deletions. For edge insertions, we
use the approach of
Henzinger~\cite{henzinger1995approximating} and
Goranci~\etal\cite{goranci2018incremental}, who maintain a compact data
structure of all minimum cuts in a graph and invalidate only the minimum cuts
that are affected by an edge insertion. We hereby use the recent algorithm of
Henzinger~\etal\cite{henzinger2020finding} to compute all minimum cuts in a
graph. For edge deletions, we use the push-relabel algorithm of Goldberg and
Tarjan~\cite{goldberg1988new} to certify whether the previous minimum cut is
still a minimum cut. As we only need to certify whether an edge deletion changes
the value of the minimum cut, we can perform optimizations that significantly
improve the speed of the push-relabel algorithm for our application. In
particular, we develop a fast initial labeling scheme and terminate early when
the connecitivity value is certified.

\section{Basic Concepts}

Let $G = (V, E, c)$ be a weighted undirected simple graph with vertex set $V$, edge set $E \subset V \times V$ and
non-negative edge weights $c: E \rightarrow \MdN$. 
We extend $c$ to a set of edges $E' \subseteq E$ by summing the weights of the edges; that is, let $c(E')\Is \sum_{e=(u,v)\in E'}c(u,v)$ and let $c(u)$ denote the sum of weights of all edges incident to vertex $v$.
Let $n = |V|$ be the
number of vertices and $m = |E|$ be the number of edges in $G$. The \emph{neighborhood}
$N(v)$ of a vertex $v$ is the set of vertices adjacent to $v$. The \emph{weighted degree} of a vertex is the sum of the weights of its incident edges. For brevity, we simply call this the \emph{degree} of the vertex.
For a set of vertices $A\subseteq V$, we denote by $E[A]\Is \{(u,v)\in E \mid u\in A, v\in V\setminus A\}$; that is, the set of edges in $E$ that start in $A$ and end in its complement.
A cut $(A, V
\setminus A)$ is a partitioning of the vertex set $V$ into two non-empty
\emph{partitions} $A$ and $V \setminus A$, each being called a \emph{side} of the cut. The \emph{capacity} or \emph{weight} of a cut $(A, V
\setminus A)$ is $c(A) = \sum_{(u,v) \in E[A]} c(u,v)$.
A \emph{minimum cut} is a cut $(A, V
\setminus A)$ that has smallest capacity $c(A)$ among all cuts in $G$. We use $\lambda(G)$ (or simply
$\lambda$, when its meaning is clear) to denote the value of the minimum cut
over all $A \subset V$. For two vertices $s$ and $t$, we denote $\lambda(G,s,t)$ as the capacity of the smallest cut of $G$, where $s$ and $t$ are on different sides of the cut. $\lambda(G,s,t)$ is also known as the \emph{minimum s-t-cut} of the graph. If all edges have weight $1$, $\lambda(G,s,t)$ is also called the \emph{connectivity} of vertices $s$ and $t$. The connectivity $\lambda(G,e)$ of an edge $e=(s,t)$ is defined as $\lambda(G,s,t)$, the connectivity of its incident vertices. At any point in the execution of a minimum cut algorithm,
$\hat\lambda(G)$ (or simply $\hat\lambda$) denotes the smallest upper bound of the
minimum cut that the algorithm discovered until that point. 
For a vertex $u \in V$ with minimum vertex degree, the size of the \emph{trivial cut} $(\{u\}, V\setminus \{u\})$ is equal to the vertex degree of $u$.
Many algorithms tackling the minimum cut problem use \emph{graph contraction}.
Given
an edge $e = (u, v) \in E$, we define $G/(u, v)$ (or $G/e$) to be the graph after \emph{contracting
edge} $(u, v)$. In the contracted graph, we delete vertex $v$ and all edges
incident to this vertex. For each edge $(v, w) \in E$, we add an edge $(u, w)$
with $c(u, w) = c(v, w)$ to~$G$ or, if the edge already exists, we give it the edge
weight $c(u,w) + c(v,w)$.

A graph with $n$ vertices can have up to $\Omega(n^2)$ minimum cuts~\cite{karger2000minimum}. To see that this bound is tight, consider an unweighted cycle with $n$ vertices. Each set of $2$ edges in this cycle is a minimum cut of $G$. This yields a total of $\binom{n}{2}$ minimum cuts. 
However, all minimum cuts can be represented by a cactus graph $\mathcal{C}$ with up to $2n$ vertices and $\Oh{n}$ edges~\cite{nagamochi2000fast}. A cactus graph is a connected graph, in which any two simple cycles have at most one vertex in common. In a cactus graph, each edge belongs to at most one simple cycle. 

To represent all minimum cuts of a graph $G$ in an edge-weighted cactus graph
$\mathcal{C} = (V(\mathcal{C}), E(\mathcal{C}))$, each vertex of $\mathcal{C}$
represents a possibly empty set of vertices of $G$ and each vertex in $G$
belongs to the set of one vertex in $\mathcal{C}$. Let $\Pi$ be a function that
assigns to each vertex of $\mathcal{C}$ is set of vertices of $G$. Then every
cut $(S, V(\mathcal{C}) \backslash S)$ in $\mathcal{C}$ corresponds to a minimum cut $(A, V
\backslash A)$ in $G$ where $A=\cup_{x\in S} \Pi(x)$. In $\mathcal{C}$, all
edges that do not belong to a cycle have weight $\lambda$ and all cycle edges
have weight $\frac{\lambda}{2}$. A minimum cut in $\mathcal{C}$ consists of
either one tree edge or two edges of the same cycle. We denote by $n^*$ the
number of vertices in $\mathcal{C}$ and $m^*$ the number of edges in
$\mathcal{C}$. The weight $c(v)$ of a vertex $v \in \mathcal{C}$ is equal to the
number of vertices in $G$ that are assigned to $v$.

\subsection{Push-relabel algorithm}

In this work we use and adapt the push-relabel algorithm of Goldberg and Tarjan~\cite{goldberg1988new} for the minimum $s$-$t$-cut problem. The algorithm aims to push as much flow as possible from the \emph{source} vertex $s$ to the \emph{sink} vertex $t$ and returns the value of the maximum flow between $s$ and $t$, which is equal to the value of the minimum cut separating them~\cite{dantzig2003max}. We now give a brief description of the algorithm, for more details we refer the reader to the original work~\cite{goldberg1988new}.

Let $G = (V,E,c)$ be a directed edge-weighted graph. An undirected edge $e=(u,v)$ is hereby interpreted as two symmetric directed edges $(u,v)$ and $(v,u)$ with $c(e) = c(u,v) = c(v,u)$. In the push-relabel algorithm, each vertex $v \in V$ has a \emph{distance} or \emph{height label} $d(x)$, initially $d(x) = 0$ for every vertex except $d(s) = n$. The algorithm handles a \emph{preflow}, a function $f$ so that for each edge $e$, $0 \geq f(e) \geq c(e)$ and for each $v \in V \backslash s$, $\sum_{(v,x) \in E} f((v,x)) \leq \sum_{(y,v)\in E} f((y,v))$ there is at least as much ingoing as outgoing flow. The difference in ingoing and outgoing flow in a vertex is called the \emph{excess flow} of this vertex.

First, the algorithm pushes flow from $s$ to all neighboring vertices, afterwards vertices push their excess flow to neighbors with a lower distance $d$. If a vertex $v$ has positive excess but no neighbors with a lower distance, the \emph{relabel} function increases the distance of $v$ until at least one outgoing preflow $f$ can be increased. At termination, the push-relabel algorithm reaches a \emph{flow}, where each edge $e$ has $0 \leq f(e) \leq c(e)$ units of flow and the excess of each vertex except $s$ and $t$ is $0$. The value of the minimum cut $\lambda(s,t)$ separating $s$ and $t$ is equal to the excess flow on $t$. Inherent to the push-relabel algorithm is the \emph{residual graph} $G_f = (V,E_f)$ for a given preflow $f$, where $E_f$ contains all edges $e = (u,v) \in E$ with $f(e) < c(e)$, \ie edges that have capacity to handle additional flow, and a reverse-edge for every edge where $0 < f(e)$.

\subsection{Finding All Minimum Cuts in a Graph}

Recently, we developed an algorithm to find all minimum cuts in a graph~\cite{henzinger2020finding} that can find all minimum cuts in graphs with up to billions of edges and millions of minimum cuts in mere minutes. The algorithm first uses a multitude of data-reduction techniques to find edges that can be contracted without affecting any minimum cut in the graph. On the reduced graph, the algorithm is based on the recursive algorithm of Nagamochi~\etal\cite{nagamochi2000fast}. In each step, the algorithm of Nagamochi~\etal\cite{nagamochi2000fast} selects a random edge $e = (u,v)$ and computes $\lambda(u,v)$, the smallest cut separating them. If $\lambda(u,v) = \lambda$, the edge is \emph{critical} and at least one minimum cut separates $u$ from $v$. The set of minimum $u$-$v$-cuts can be described as a set of vertex sets $(V_1,\dots,V_k)$ with $u \in V_1$ and $v \in V_k$, where for each $i \in [1,k-1]: (V_1\cup\dots\cup V_i, V_{i+1}\cup\dots\cup V_k)$ is a minimum cut. The algorithm then creates one instance for each vertex set $V_i$, where $V \backslash V_i$ is contracted into a single vertex and combines the cacti when leaving the recursion. The algorithm finds a cactus $\mathcal{C}$ representing all minimum cuts in time $\Oh{nm + n^2 \log{n} + n^*m\log{n}}$, but is significantly faster in practice on real-world instances.

\section{Fully Dynamic Minimum Cut}

In this section we develop an efficient fully-dynamic algorithm for the global
minimum cut. For this, we use techniques from a multitude of original works
combined with new and improved algorithmic solutions to engineer an algorithm
that is able to solve the dynamic minimum cut problem by orders of magnitude
faster than a static recomputation of the solution for a wide variety of graphs.
An important observation for dynamic minimum cut algorithms is that graphs often
have a large set of global minimum cuts~\cite{henzinger2020finding}. Thus,
dynamic minimum cut algorithms can avoid costly recomputation by storing a
compact data structure representing all minimum
cuts~\cite{henzinger1995approximating,goranci2018incremental} and only
invalidate changed cuts in edge insertion. The data structure we use is a
\emph{cactus graph}, \ie a graph in which every vertex is part of at most one
cycle. A minimum cut in the cactus graph is represented by either a tree edge or
two edges of the same cycle~\cite{henzinger2020finding}. For a graph with
multiple connected components, \ie a graph whose minimum cut value $\lambda =
0$, the cactus graph $\mathcal{C}$ has an empty edge set and one vertex
corresponding to each connected component.

The rest of this section is organized as follows: we start by explaining the
incremental minimum cut algorithm, followed by a description of the decremental minimum
cut algorithm and conclude by showing how to combine the routines to a fully
dynamic minimum cut algorithm.

\subsection{Incremental Minimum Cut}
\label{ss:insert}

For incremental minimum cut, our algorithm is closely related to the exact incremental dynamic algorithms of Henzinger~\cite{henzinger1995approximating} and Goranci~\etal\cite{goranci2018incremental}. On initialization of the algorithm with graph $G$, we run the recent algorithm of Henzinger~\etal\cite{henzinger2020finding} on $G$ to find the weight of the minimum cut $\lambda$ and the cactus graph $\mathcal{C}$ representing all minimum cuts in $G$. Each minimum cut in $\mathcal{C}$ corresponds to a minimum cut in $G$ and each minimum cut in $G$ corresponds to one or more minimum cuts in $\mathcal{C}$~\cite{henzinger1995approximating}.

The insertion of an edge $e=(u,v)$ with positive weight $c(e) > 0$ increases the weight of all cuts in which $u$ and $v$ are in different partitions, \ie in different vertices of the cactus graph $\mathcal{C}$. The weight of cuts in which $u$ and $v$ are in the same partition remains unchanged. As edge weights are non-negative, no cut weight can be decreased by inserting additional edges.

If $\Pi(u) = \Pi(v)$, \ie both vertices are mapped to the same vertex in $\mathcal{C}$, there is no minimum cut that separates $u$ and $v$ and all minimum cuts remain intact. If $\Pi(u) \neq \Pi(v)$, \ie the vertices are mapped to different vertices in $\mathcal{C}$, we need to invalidate the affected minimum cuts by contracting the corresponding edges in $\mathcal{C}$.

\paragraph*{Path Contraction}

Dinitz~\cite{dinitz1993maintaining} shows that for a connected graph with
$\lambda > 0$ the minimum cuts that are affected by the insertion of $(u,v)$
correspond to the minimum cuts on the path between $\Pi(u)$ and $\Pi(v)$. We
find the path using alternating breadth-first searches from $\Pi(u)$ and
$\Pi(v)$. For this path-finding algorithm, imagine the cactus graph
$\mathcal{C}$ as a tree graph in which each cycle is contracted into a single
vertex. On this tree, there is a unique path from $\Pi(u)$ to $\Pi(v)$.  

For every cycle in $\mathcal{C}$ that contains at least two vertices of the path
between $\Pi(u)$ and $\Pi(v)$, the cycle is ``squeezed'' by contracting the
first and last path vertex in the cycle, thus creating up to two new cycles.
Figure~\ref{fig:contract} shows an example in which a cycle is squeezed. In
Figure~\ref{fig:contract}, the cycle is squeezed by contracting the bottom left
and top right vertices. This creates a new cycle of size $3$ and a ``cycle'' of
size $2$, which is simply a new tree edge in the cactus graph $\mathcal{C}$. For
details and correctness proofs we refer the reader to the work of
Dinitz~\cite{dinitz1993maintaining}. The intuition is that due to the insertion
of the new edge, all cactus vertices in the path from $\Pi(u)$ and $\Pi(v)$ are now
connected with a value $> \lambda$, as their previous connection was $\lambda$
and the newly introduced edge increased it. For any cycle in the path, this also
includes the first and last cycle vertices $x$ and $y$ in the path, as these two vertices now
have a higher connectivity $\lambda(x,y)$. The minimum cuts that are represented
by edges in this cycle that have $x$ and $y$ on the same side are unaffected, as
all vertices in the path from $\Pi(u)$ and $\Pi(v)$ are on the same side of this
cut. As this is not true for cuts that separate $x$ and $y$, we merge $x$ and
$y$ (as well as the rest of the path from $\Pi(u)$ to $\Pi(v)$), which
``squeezes'' the cycle and creates up to two new cycles.

\begin{figure}[t!]
       \includegraphics[width=\textwidth]{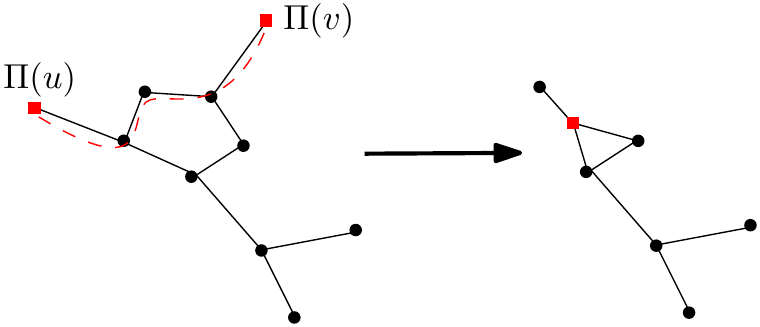}
       \caption{\label{fig:contract} Insertion of edge $e=(u,v)$ - contraction of path in $\mathcal{C}$, squeezing of cycle}
\end{figure}

If the graph has multiple connected components, \ie the graph has a minimum cut value $\lambda = 0$, $\mathcal{C}$ is a graph with no edges where each connected component is mapped to a vertex. The insertion of an edge between different connected components $\Pi(u)$ and $\Pi(v)$ merges the two vertices representing the connected components, as they are now connected.

If $\mathcal{C}$ has at least two non-empty vertices after the edge insertion, there is at least one minimum cut of value $\lambda$ remaining in the graph, as all minimum cuts that were affected by the insertion of edge $e$ were just removed from the cactus graph $\mathcal{C}$. As an edge insertion cannot decrease any connectivities, $\lambda$ remains the value of the minimum cut. If $\mathcal{C}$ only has a single non-empty vertex, we need to recompute the cactus graph $\mathcal{C}$ using the algorithm of Henzinger~\etal\cite{henzinger2020finding}.

Checking the set affiliation $\Pi$ of $u$ and $v$ can be done in constant time.
If $\Pi(u) = \Pi(v)$ and the cactus graph does not need to be updated, no
additional work needs to be done. If $\Pi(u) \neq \Pi(v)$, we perform
breadth-first search on $\mathcal{C}$ with $n^* \coloneqq |V(\mathcal{C})|$ and
$m^* \coloneqq |E(\mathcal{C})|$ which has a asymptotic running time of
$\Oh{n^* + m^*} = \Oh{n^*}$, contract the path from $\Pi(u)$ to $\Pi(v)$ in
$\Oh{n^*}$ and then update the set affiliation of all contracted vertices. This
update has a worst-case running time of $\Oh{n}$, however, contracting all
vertices of the path from $\Pi(u)$ to $\Pi(v)$ into the cactus graph vertex that
already corresponds to the most vertices of $G$, we often only need to update
the affiliation of a few vertices. Both the initial computation and a full
recomputation of the minimum cut cactus have a worst-case running time of
$\Oh{nm + n^2 \log{n} + n^*m\log{n}}$.

\subsection{Decremental Minimum Cut}
\label{ss:delete}

The deletion of an edge $e = (u,v)$ with positive weight $c(e) > 0$ decreases
the weight of all cuts in which $u$ and $v$ are in different partitions. This
might lead to a decrease of the minimum cut value $\lambda$ and thus the
invalidation of the minimum cuts in the existing minimum cut cactus
$\mathcal{C}$. The value of the minimum cut $\lambda(G,u,v)$ that separates
vertices $u$ and $v$ is equal to the maximum flow between them and can be found
by a variety of
algorithms~\cite{dinic1970algorithm,ford1956maximal,goldberg1988new}. In order
to check whether $\lambda$ is decreased by this edge deletion, we need to check
whether $\lambda(G-e,u,v) < \lambda(G)$. For this purpose, we use the
push-relabel algorithm of Goldberg and Tarjan~\cite{goldberg1988new} which aims
to push flow from $u$ to $v$ until there is no more possible path. 

\paragraph*{Early Termination}

We terminate the algorithm as soon as $\lambda(G)$ units of flow reached $v$. If $\lambda(G)$ units of flow from $u$ reached $v$, we know that $\lambda(G-e,u,v) \geq \lambda(G)$, \ie the connectivity of $u$ and $v$ on $G-e$ is at least as large as the minimum cut on $G$, the minimum cut value $\lambda$ remains unchanged. Note that iff $\lambda(G-e,u,v) = \lambda(G)$, the deletion of $e$ introduces one or more new minimum cuts. We do not introduce these new cuts to $\mathcal{C}$. The trade-off hereby is that we are able to terminate the push-relabel algorithm earlier and do not need to perform potentially expensive operations to update the cactus, but do not necessarily keep all cuts and have to recompute the cactus earlier. As most real-world graphs have a large number of minimum cuts~\cite{henzinger2020finding}, there are far more edge deletions than recomputations of $\mathcal{C}$.

Each edge deletion calls the push-relabel algorithm using the lowest-label selection
rule with a worst-case running time of $\Oh{n^2m}$~\cite{goldberg1988new}. The
lowest-label selection rule picks the active vertices whose distance label is
lowest, \ie a vertex that is close to the sink $v$. Using
highest-level selection would improve the worst-case running time to
$\Oh{n^2\sqrt{m}}$, but we aim to push as much flow as possible to the sink
early to be able to terminate the algorithm early as soon as $\lambda$ units of
flow reach the sink. Using lowest-level selection prioritizes the vertices close
to the sink and thus increases the amount of flow which reaches the sink at a
given point in time. Preliminary experiments show faster running times using the
lowest-level selection rule.

\subsubsection{Decremental Rebuild of Cactus Graph}

If the push-relabel algorithm finishes with a value of $< \lambda(G)$, we update the minimum cut value $\lambda(G-e)$ to $\lambda(G-e,u,v)$. As the minimum cut value changed by the deletion of $e$ and this deletion only affects cuts which contain $e$, we know that all minimum cuts of the updated graph $G-e$ separate $u$ and $v$. We use this information to significantly speed up the cactus construction. Instead of running the algorithm of Henzinger~\etal\cite{henzinger2020finding}, we run only the subroutine which is used to compute the $(u,v)$-cactus, \ie the cactus graph which contains all cuts that separate $u$ and $v$, as we know that all minimum cuts of $G-e$ separate $u$ and $v$. This routine, developed by Nagamochi and Kameda~\cite{nagamochi1996constructing}, finds a $u$-$v$-cactus a running time of $\Oh{n+m}$.

Note that the routine of Nagamochi and Kameda~\cite{nagamochi1996constructing}
only guarantees to find all minimum $u$-$v$-cuts if an edge $e = (u,v)$ with
$c(e) > 0$ exists (\cite[Lemma 3.4]{nagamochi1996constructing}). As this edge
was just deleted in $G-e$ and therefore does not exist, it is possible that
\emph{crossing} $u$-$v$-cuts $(X,\overline X)$ and $(Y,\overline Y)$ with $u \in
X$ and $u \in Y$ exist. Two cuts are \emph{crossing}, if both $(\overline X \cap
Y)$ and $(Y \cap \overline X)$ are not empty. As we only find one cut in a pair
of crossing cuts, the $u$-$v$-cactus is not necessarily maximal. However, the
operation is significantly faster than recomputing the complete minimum cut
cactus in which almost all edges are not part of any minimum cut. While it is
not guaranteed that the decremental rebuild algorithm finds all minimum cuts in
$G-e$, every cut of size $\lambda(G-e,u,v)$ that is found is a minimum cut.
As we build the minimum cut cactus out of minimum cuts, it is a valid (but
potentially incomplete) minimum cut cactus and the algorithm is correct.

\subsubsection{Local Relabeling}

Many efficient implementations of the push-relabel algorithm use the global relabeling heuristic~\cite{cherkassky1997implementing} in order to direct flow towards the sink more efficiently. The push-relabel algorithm maintains a distance label $d$ for each vertex to indicate the distance from that vertex to the sink using only edges that can receive additional flow. The global relabeling heuristic hereby periodically performs backward breadth-first search to compute distance labels on all vertices.

This heuristic can also be used to set the initial distance labels in the flow
network for a flow problem with source $u$ and sink $v$. This has a running time
of $\Oh{n + m}$ but helps lead the flow towards the sink. As our algorithm
terminates the push-relabel algorithm early, we try to avoid the $\Oh{m}$
running time while still giving the flow some guidance. Thus, we perform
\emph{local relabeling} with a \emph{relabeling depth} of $\gamma$ for $\gamma
\in [0, n)$, where we set $d(v) = 0$, $d(u) = n$ and then perform a backward
breadth-first search around the sink $v$, in which we set $d(x)$ to the length
of the shortest path between $x$ and $v$ (at this point, there is no flow in the
network, so every edge in $G$ is admissible). Instead of setting the distance of
every vertex, we only explore the neighborhoods of vertices $x$ with $d(x) <
\gamma$, thus we only set the distance-to-sink for vertices with $d(x) \leq
\gamma$. For every vertex $y$ with a higher distance, we set $d(y) = (\gamma +
1)$. This results in a running time for setting the distance labels of $\Oh{n}$ 
plus the time needed to perform the bounded-depth breadth-first search.

This process creates a ``funnel'' around the sink to lead flow towards it,
without incurring a running time overhead of $\Theta(m)$ (if $\gamma$ is set
sufficiently low). Note that this is useful because the push-relabel algorithm
is terminated early in many cases and thus initializing the distance labels
faster can give a large speedup. We give experimental
results for different relabeling depths $\gamma$ for local relabeling in our
application in Section~\ref{ss:exp-relabel}.

\paragraph*{Correctness}

Goldberg and Tarjan show that each push and relabel operation in the push-relabel algorithm preserve a \emph{valid labeling}~\cite{goldberg1988new}. A valid labeling is a labeling $d$, where in a given preflow $f$ and corresponding residual graph $G_f$, for each edge $e = (u,v) \in E_f$, $d(u) \leq d(v) + 1$. We therefore need to show that the labeling $d$ that is given by the initial local relabeling is a valid labeling.

\begin{lemma} \label{lem:valid-flow}
    Let $G=(V,E,c)$ be a flow-graph with source $s$ and sink $t$ and let $d$ be the vertex labeling given by the local relabeling algorithm. The vertex labeling $d$ is a valid labeling.
\end{lemma}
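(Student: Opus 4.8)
The plan is to verify the defining inequality of a valid labeling, namely $d(a) \leq d(b) + 1$ for every residual edge $(a,b)\in E_f$ of the algorithm's initial preflow $f$, directly from the explicit values that local relabeling assigns. First I would record these values in a single closed form: for the sink we have $d(t) = 0$, for the source $d(s) = n$, and for every other vertex $x$ the algorithm sets $d(x) = \min(\mathrm{dist}(x,t),\,\gamma+1)$, where $\mathrm{dist}(x,t)$ denotes the hop-distance from $x$ to $t$ in $G$. This is exactly the stated rule --- vertices reached within depth $\gamma$ receive their true shortest-path distance, while all remaining vertices receive $\gamma+1$ --- and it is well defined because at the time of the backward breadth-first search there is no flow, so every edge of $G$ may be traversed.

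The main case is an edge $(a,b)\in E_f$ with $a,b\neq s$. Since the initial preflow places flow only on the source edges, every residual edge avoiding $s$ corresponds to an undirected edge $\{a,b\}\in E$, so the triangle inequality for shortest paths gives $\mathrm{dist}(a,t)\leq \mathrm{dist}(b,t)+1$. The key observation is that capping a distance function at $\gamma+1$ preserves this $1$-Lipschitz property: for every integer $p$ and any threshold $q$ one has $\min(p+1,q)\leq \min(p,q)+1$. Combining the monotonicity of $\min(\cdot,\gamma+1)$ with the triangle inequality then yields
\[
d(a)=\min(\mathrm{dist}(a,t),\gamma+1)\leq \min(\mathrm{dist}(b,t)+1,\gamma+1)\leq \min(\mathrm{dist}(b,t),\gamma+1)+1=d(b)+1,
\]
which is the required inequality. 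This argument uniformly covers the interior of the BFS region, the capped region, and --- crucially --- the boundary between them, where a vertex at distance $\gamma+1$ is adjacent to one at distance $\gamma$; no separate boundary analysis is needed, since the triangle inequality rules out any larger downward jump across an edge.

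It remains to treat edges incident to the two terminals. For the sink, $d(t)=0$ is the smallest possible label, so any residual edge leaving $t$ trivially satisfies $0\leq d(b)+1$, while residual edges entering $t$ fall under the main case. The delicate point --- the one I expect to be the main obstacle --- is the source, because $d(s)=n$ is far larger than the labels of its neighbors, so a residual edge \emph{out} of $s$ would violate validity. This is resolved by the standard push-relabel initialization: the given preflow $f$ saturates every edge $(s,b)$, so no forward source edge survives in $E_f$, and only edges \emph{into} $s$ remain. For such an edge $(a,s)$ the condition reads $d(a)\leq d(s)+1=n+1$, which holds since $d(a)\leq \gamma+1\leq n$ because $\gamma\in[0,n)$. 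Having checked every residual edge, I would conclude that $d$ is a valid labeling.
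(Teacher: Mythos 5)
Your proposal is correct and follows essentially the same route as the paper's proof: both reduce validity to showing that labels of adjacent non-source vertices differ by at most one (you via the explicit closed form $d(x)=\min(\mathrm{dist}(x,t),\gamma+1)$ and the fact that capping preserves the $1$-Lipschitz property, the paper via a contradiction argument on the BFS exploration order), and both dispose of source edges by noting that outgoing source edges are saturated while incoming ones satisfy $d(a)\leq\gamma+1\leq n=d(s)$. Your handling of the source bound is in fact slightly more careful than the paper's, which writes $\gamma+1=n$ where $\gamma+1\leq n$ is what is actually needed.
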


\begin{proof}
       The vertex labeling $d$ is generated using breadth-first search. Thus, for every edge $e = (u,v)$ where $u \neq s$ and $v \neq s$, $|d(u) - d(v)| \leq 1$. We prove this by contradiction. W.l.o.g. assume that $d(u) - d(v) > 1$. As $u \neq s$ and $s$ is the only vertex with $d(s) > \gamma$, $d(u) \leq \gamma + 1$ and $d(v) < \gamma$. Thus, at some point of the breadth-first search, we set the distance labels of all neighbors of $v$ that do not yet have a distance label to $d(v) + 1$. As edge $e = (u,v)$ exists, $u$ and $v$ are neighbors and the labeling sets $d(u) = d(v) + 1$. This contradicts $d(u) - d(v) > 1$. 

       This shows that the labeling is valid for every edge not incident to the source $s$, as distance labels of incident non-source vertices differ by at most $1$. The only edges we need to check are edges incident to $s$. In the initialization of the push-relabel algorithm, all outgoing edges of the source $s$ are fully saturated with flow and are thus no outgoing edge of $s$ is in $E_f$. For ingoing edges $e = (v,s)$, we know that $0 \leq d(v) \leq \gamma + 1 = n$ and thus know that $d(v) \leq d(s)$. Thus $e$ respects the validity of labeling $d$.
\end{proof}

Lemma~\ref{lem:valid-flow} shows that local relabeling gives a valid labeling; which is upheld by the operations in the push-relabel algorithm~\cite{goldberg1988new}. Thus, correctness of the modified algorithm follows from the correctness proof of Goldberg and Tarjan.

Resetting the vertex data structures can be performed in $\Oh{n}$, however there are $m$ edges whose current flow needs to be reset to $0$. Using early termination we hope to solve some problems very fast in practice, as we can sometimes terminate early without exploring large parts of the graph. Thus, resetting of the edge flows in $\Oh{m}$ is a significant problem and is avoided using implicit resetting as described in the following paragraph.

Each flow problem that is solved over the course of the dynamic minimum cut
algorithm is given a unique ID, starting at an arbitrary integer and incrementing
from there.  In addition to the current flow on an edge, we also store the ID of
the last problem which accessed the flow on this edge. When the flow of an edge
is read or updated in a flow problem, we check whether the ID of the last access
equals the ID of the current problem. If they are equal, we simply return or
update the flow value, as the edge has already been accessed in this flow
problem and does not need to be reset. Otherwise, we need to reset the edge flow
to $0$ and set the problem ID to the ID of the current problem and then perform
the operation on the updated edge. Thus, we implicitly reset the edge flow on
first access in the current problem. As we increment the flow problem ID after
every flow problem, no two flow problems share the same ID.

Using this implicit reset of the edge flows saves $\Oh{m}$ overhead but
introduces a constant amount of work on each access and
update of the edge flow. It is therefore useful in practice if the problem
terminates with significantly fewer than $m$ flow updates due to early
termination. It does not affect the worst-case running time of the algorithm, as
we only perform a constant amount of work on each edge update. The running time
of the initialization of the implementation is improved from $\Oh{n+m}$ to
$\Oh{n}$, as we do not explicitly reset the flow on each edge.

\subsection{Fully Dynamic Minimum Cut}
\label{ss:fully}

Based on the incremental and decremental algorithm, we now describe our fully
dynamic algorithm. As the operations in the previous section each output the
minimum cut $\lambda(G)$ and a corresponding cut cactus $\mathcal{C}$ that
stores a set of minimum cuts for $G$, the algorithm gives correct results on all
operations. However, there are update sequences in which every insertion or
deletion changes the minimum cut value and, thus, triggers a recomputation
of the minimum cut cactus $\mathcal{C}$. One such example is the repeated
deletion and reinsertion of an edge that belongs to a minimum cut. In the
following paragraphs we describe a technique that is used to mitigate such
worst-case instances. Nevertheless, it is still possible to construct update
sequences in which the minimum cut cactus $\mathcal{C}$ needs to be recomputed
every $\Oh{1}$ edge updates and thus the worst-case asymptotic running
time per update is equal to the running time of the static algorithm. 

\subsubsection{Cactus Cache}
\label{ss:cache}

Computing the minimum cut cactus $\mathcal{C}$ is expensive if there is a large set of minimum cuts and the cactus is therefore large. Thus, it is beneficial to reduce the amount of recomputations to speed up the process. On some fully dynamic workloads, the minimum cut often jumps between values $\lambda_1$ and $\lambda_2$ with $\lambda_1 > \lambda_2$, where the minimum cut cactus for cut value $\lambda_1$ is large and thus expensive to recompute whenever the cut value changes.

A simple example workload is a large unweighted cycle, which has a minimum cut
of $2$. If we delete any edge, the minimum cut value changes to $1$, as the
incident vertices have a degree of $1$. By reinserting the just-deleted edge,
the minimum cut changes to a value of $2$ again and the minimum cut cactus is
equal to the cactus prior to the edge deletion. Thus we can save a
significant amount of work by caching and reusing the previous cactus graph when
the minimum cut is increased to $2$ again.

\paragraph*{Reuse Cactus Graph from Cache}

Whenever the deletion of an edge $e$ from graph $G$ decreases the minimum cut value from $\lambda_1$ to $\lambda_2$, we cache the previous cactus $\mathcal{C}$. After this point, we also remember all edge insertions, as these can invalidate minimum cuts in $\mathcal{C}$. If at a later point the minimum cut is again increased from $\lambda_2$ to $\lambda_1$ and the number of edge insertions divided by the number of vertices in $\mathcal{C}$ is smaller than a parameter $\delta$, we recreate the cactus graph from the cache instead of recomputing it. The default value for $\delta$ is $2$. The algorithm does not store the intermediate edge deletion, as there can only lower connectivities and by computing the minimum cut value we know that there is no cut of value $< \lambda_1$ and thus all cuts of value $\lambda_1$ are global minimum cuts.

For each edge insertion since caching we perform the edge insertion operation from Section~\ref{ss:insert} to eliminate all cuts that are invalidated by the edge insertion. All cuts that remain in $\mathcal{C}$ are still minimum cuts. If there are only a small amount of edge insertions since the cactus was cached, this is significantly faster than recomputing the cactus from scratch. As we do not remember edge deletions, the cactus might not contain all minimum cuts and thus require slightly earlier recomputation.

\section{Experiments and Results} \label{s:experiments}

We now perform an experimental evaluation of the proposed algorithms.
This is done in the following order. In Section~\ref{ss:instances} we show and detail the
static and dynamic graph instances that we use in our experimental evaluation.
In Section~\ref{ss:exp-relabel} we analyze the impact of local relabeling on the
static preflow-push algorithm to determine with value of the relabeling depth to
use in the experiments on dynamic graphs. Then
(Sections~\ref{ss:exp-dynamic}~and~\ref{ss:exp-static}) we evaluate our dynamic
algorithms on a wide variety of instances. In Section~\ref{ss:exp-worstcase} we
then generate a set of worst-case problems and use these to evaluate the
performance of our algorithm on instances that were created in order to be
difficult for them.

\paragraph*{Experimental Setup and Methodology}

We implemented the algorithms using \CC-17 and compiled all code using g++ version 8.3.0 with full optimization (\texttt{-O3}). Our experiments are conducted on a machine with two Intel Xeon Gold 6130 processors with 2.1GHz with 16 CPU cores each and $256$ GB RAM in total. In this section we first describe our experimental methodology. Afterwards, we evaluate different algorithmic choices in our algorithm and then we compare our algorithm to the state of the art. When we report a mean result we give the geometric mean as problems differ significantly in cut size and time. Our code is freely available under the permissive MIT license\ifDoubleBlind{\footnote{Link removed, as this submission is double-blind}}\else{\footnote{\url{https://github.com/VieCut/VieCut}}}\fi{}. 

\subsection{Graph Instances}
\label{ss:instances}

In our experiments, we use a wide variety of static and dynamic graph instances. These are social graphs, web graphs, co-purchase matrices, cooperation networks and some generated instances.  All instances are undirected. If the original graph is directed, we generate an undirected graph by removing edge directions and then removing duplicate edges. In our experiments, we use three families of graphs.

\begin{table*}[!ht]
       \setlength\intextsep{0pt}
       \centering
       \resizebox*{.45\textwidth}{!}{
       \begin{tabular}{l||r|r|r|r|r}
              \hline
       \multicolumn{6}{c}{Graph Family A}\\\hline
         Graph & $n$ & $m$ & $\lambda$ & $\delta$ & $n^*$ \\\hline\hline
         \texttt{com-orkut} & $2.4M$ & $112M$ & \numprint{14} & \numprint{16} & \numprint{2} \\
         & \numprint{114190} & $18M$ & \numprint{89} & \numprint{95} & \numprint{2} \\
         & \numprint{107486} & $17M$ & \numprint{76} & \numprint{98} & \numprint{2} \\
         & \numprint{103911} & $17M$ & \numprint{70} & \numprint{100} & \numprint{2} \\\hline
         \texttt{eu-2005} & \numprint{605264} & $15M$ & \numprint{1} & \numprint{10} & \numprint{63} \\
         & \numprint{271497} & $10M$ & \numprint{2} & \numprint{25} & \numprint{3} \\
         & \numprint{58829} & $3.7M$ & \numprint{29} & \numprint{60} & \numprint{2} \\
         & \numprint{5289} & \numprint{464821} & \numprint{19} & \numprint{100} & \numprint{2}\\\hline
         \texttt{gsh-2015-host}  & $25M$ & $1.3B$ & \numprint{1} & \numprint{10} & \numprint{175}\\
          & $5.3M$ & $944M$ & \numprint{1} & \numprint{50} & \numprint{32}\\
               & $2.6M$ & $778M$ & \numprint{1} & \numprint{100} & \numprint{16}\\
              & \numprint{98275} & $188M$ & \numprint{1} & \numprint{1000} & \numprint{3} \\\hline
       \texttt{hollywood-2011} & $1.3M$ & $109M$ & \numprint{1} & \numprint{20} & \numprint{13}\\
        & \numprint{576111} & $87M$ & \numprint{6} & \numprint{60} & \numprint{2} \\
               & \numprint{328631} & $71M$ & \numprint{77} & \numprint{100} & \numprint{2}\\
               & \numprint{138536} & $47M$ & \numprint{27} & \numprint{200} & \numprint{2}\\\hline
               \texttt{twitter-2010}  & $13M$ & $958M$ & \numprint{1} & \numprint{25} & \numprint{2} \\
                 & $10M$ & $884M$ & \numprint{1} & \numprint{30} & \numprint{3} \\
                    & $4.3M$ & $672M$ & \numprint{3} & \numprint{50} & \numprint{3}\\
                     & $3.5M$ & $625M$ & \numprint{3} & \numprint{60} & \numprint{2}\\\hline
         \texttt{uk-2002} & $9M$ & $226M$ & \numprint{1} & \numprint{10} & \numprint{1940} \\
          & $2.5M$ & $115M$ & \numprint{1} & \numprint{30} & \numprint{347}\\
               & \numprint{783316} & $51M$ & \numprint{1} & \numprint{50} & \numprint{138}\\
               & \numprint{98275} & $11M$ & \numprint{1} & \numprint{100} & \numprint{20} \\\hline
         \texttt{uk-2007-05} & $68M$ & $3.1B$ & \numprint{1} & \numprint{10} & \numprint{3202}\\
         & $16M$ & $1.7B$ & \numprint{1} & \numprint{50} & \numprint{387} \\
               & $3.9M$ & $862M$ & \numprint{1} & \numprint{100} & \numprint{134}\\
               & \numprint{223416} & $183M$ & \numprint{1} & \numprint{1000} & \numprint{2}\\\hline

              \hline\hline
              \multicolumn{6}{c}{Graph Family B}\\\hline
    \texttt{amazon} & \numprint{64813} & \numprint{153973} & \numprint{1}& \numprint{1} & \numprint{10068} \\\hline
    \texttt{auto} & \numprint{448695} & $3.31M$ & \numprint{4} & \numprint{4} & \numprint{43} \\
    & \numprint{448529} & $3.31M$ & \numprint{5} & \numprint{5} & \numprint{102} \\
    & \numprint{448037} & $3.31M$ & \numprint{6} & \numprint{6} & \numprint{557} \\
    & \numprint{444947} & $3.29M$ & \numprint{7} & \numprint{7} & \numprint{1128} \\
    & \numprint{437975} & $3.24M$ & \numprint{8} & \numprint{8} & \numprint{2792} \\
    & \numprint{418547} & $3.10M$ & \numprint{9} & \numprint{9} & \numprint{5814} \\ \hline
    \texttt{caidaRouterLevel} & \numprint{190914} & \numprint{607610} & \numprint{1} & \numprint{1} & \numprint{49940} \\\hline
    \texttt{cfd2} & \numprint{123440} & $1.48M$ & \numprint{7} & \numprint{7} & \numprint{15} \\\hline
    \texttt{citationCiteseer} & \numprint{268495} & $1.16M$ & \numprint{1} & \numprint{1} & \numprint{43031} \\
     & \numprint{223587} & $1.11M$ & \numprint{2} & \numprint{2} & \numprint{33423} \\
     & \numprint{162464} & \numprint{862237} & \numprint{3} & \numprint{3} & \numprint{23373} \\
     & \numprint{109522} & \numprint{435571} & \numprint{4} & \numprint{4} & \numprint{16670} \\
     & \numprint{73595} & \numprint{225089} & \numprint{5} & \numprint{5} & \numprint{11878} \\
     & \numprint{50145} & \numprint{125580} & \numprint{6} & \numprint{6} & \numprint{8770} \\\hline
     \texttt{cnr-2000} & \numprint{325557} & $2.74M$ & \numprint{1} & \numprint{1} & \numprint{87720} \\
    & \numprint{192573} & $2.25M$ & \numprint{2} & \numprint{2} & \numprint{33745} \\
    & \numprint{130710} & $1.94M$ & \numprint{3} & \numprint{3} & \numprint{11604} \\
    & \numprint{110109} & $1.83M$ & \numprint{4} & \numprint{4} & \numprint{9256} \\
    & \numprint{94664} & $1.77M$ & \numprint{5} & \numprint{5} & \numprint{4262} \\
    & \numprint{87113} & $1.70M$ & \numprint{6} & \numprint{6} & \numprint{5796} \\ 
    & \numprint{78142} & $1.62M$ & \numprint{7} & \numprint{7} & \numprint{3213} \\
    & \numprint{73070} & $1.57M$ & \numprint{8} & \numprint{8} & \numprint{2449} \\\hline
    \texttt{coAuthorsDBLP} & \numprint{299067} & \numprint{977676} & \numprint{1} & \numprint{1} & \numprint{45242} \\\hline
    \texttt{cs4} & \numprint{22499} & \numprint{43858} & \numprint{2} & \numprint{2} & \numprint{2} \\\hline
    \texttt{delaunay\_n17} & \numprint{131072} & \numprint{393176} & \numprint{3} & \numprint{3} & \numprint{1484} \\ \hline
    \texttt{fe\_ocean} & \numprint{143437} & \numprint{409593} & \numprint{1} & \numprint{1} & \numprint{40} \\\hline
    \texttt{kron-logn16} & \numprint{55319} & $2.46M$ & \numprint{1} & \numprint{1} & \numprint{6325} \\\hline
    \texttt{luxembourg} & \numprint{114599} & \numprint{239332} & \numprint{1} & \numprint{1} & \numprint{23077} \\\hline
    \texttt{vibrobox} & \numprint{12328} & \numprint{165250} & \numprint{8} & \numprint{8} & \numprint{625} \\\hline
    \texttt{wikipedia} & \numprint{35579} & \numprint{495357} & \numprint{1} & \numprint{1} & \numprint{2172} \\\hline
       \end{tabular}
       }
       \quad
       \resizebox*{.48\textwidth}{!}{
       \begin{tabular}{l||r|r|r|r|r}
       \hline
         \multicolumn{6}{c}{Graph Family B (continued)}\\\hline
         Graph & $n$ & $m$ & $\lambda$ & $\delta$ & $n^*$ \\\
         \texttt{amazon-2008} & \numprint{735323} & $3.52M$ & \numprint{1} & \numprint{1} & \numprint{82520}\\
         & \numprint{649187} & $3.42M$ & \numprint{2} & \numprint{2} & \numprint{50611}\\
         & \numprint{551882} & $3.18M$ & \numprint{3} & \numprint{3} & \numprint{35752}\\
         & \numprint{373622} & $2.12M$ & \numprint{5} & \numprint{5} & \numprint{19813}\\
         & \numprint{145625} & \numprint{582314} & \numprint{10} & \numprint{10} & \numprint{64657}\\\hline
         \texttt{coPapersCiteseer} & \numprint{434102} & $16.0M$ & \numprint{1} & \numprint{1} & \numprint{6372} \\
         & \numprint{424213} & $16.0M$ & \numprint{2} & \numprint{2} & \numprint{7529} \\
         & \numprint{409647} & $15.9M$ & \numprint{3} & \numprint{3} & \numprint{7495} \\
         & \numprint{379723} & $15.5M$ & \numprint{5} & \numprint{5} & \numprint{6515} \\
         & \numprint{310496} & $13.9M$ & \numprint{10} & \numprint{10} & \numprint{4579} \\\hline
         \texttt{eu-2005} & \numprint{862664} & $16.1M$ & \numprint{1} & \numprint{1} & \numprint{52232} \\
         & \numprint{806896} & $16.1M$ & \numprint{2} & \numprint{2} & \numprint{42151} \\
         & \numprint{738453} & $15.7M$ & \numprint{3} & \numprint{3} & \numprint{21265} \\
         & \numprint{671434} & $13.9M$ & \numprint{5} & \numprint{5} & \numprint{18722} \\
         & \numprint{552566} & $11.0M$ & \numprint{10} & \numprint{10} & \numprint{23798} \\\hline
         \texttt{hollywood-2009} & $1.07M$ & $56.3M$ & \numprint{1} & \numprint{1} & \numprint{11923} \\
         & $1.06M$ & $56.2M$ & \numprint{2} & \numprint{2} & \numprint{17386} \\
         & $1.03M$ & $55.9M$ & \numprint{3} & \numprint{3} & \numprint{21890} \\
         & \numprint{942687} & $49.2M$ & \numprint{5} & \numprint{5} & \numprint{22199} \\
         & \numprint{700630} & $16.8M$ & \numprint{10} & \numprint{10} & \numprint{19265} \\\hline
         \texttt{in-2004} & $1.35M$ & $13.1M$ & \numprint{1} & \numprint{1} & \numprint{278092} \\
         & \numprint{909203} & $11.7M$ & \numprint{2} & \numprint{2} & \numprint{89895} \\
         & \numprint{720446} & $9.2M$ & \numprint{3} & \numprint{3} & \numprint{45289} \\
         & \numprint{564109} & $7.7M$ & \numprint{5} & \numprint{5} & \numprint{33428} \\
         & \numprint{289715} & $5.1M$ & \numprint{10} & \numprint{10} & \numprint{12947} \\\hline
         \texttt{uk-2002} & $18.4M$ & $261.6M$ & \numprint{1} & \numprint{1} & $2.5M$ \\
         & $15.4M$ & $254.0M$ & \numprint{2} & \numprint{2} & $1.4M$ \\
         & $13.1M$ & $236.3M$ & \numprint{3} & \numprint{3} & \numprint{938319} \\
         & $10.6M$ & $207.6M$ & \numprint{5} & \numprint{5} & \numprint{431140} \\
         & $7.6M$ & $162.1M$ & \numprint{10} & \numprint{10} & \numprint{298716} \\
         & \numprint{657247} & $26.2M$ & \numprint{50} & \numprint{50} & \numprint{24139} \\
         & \numprint{124816} & $8.2M$ & \numprint{100} & \numprint{100} & \numprint{3863} \\
         \hline\hline
         \multicolumn{6}{c}{Graph Family C}\\\hline
         Dynamic Graph & $n$ & Insertions & Deletions & Batches & $\lambda$ \\ \hline\hline
         \texttt{aves-weaver-social} & \numprint{445} & \numprint{1423} & \numprint{0} & \numprint{23} & \numprint{0} \\
         \texttt{ca-cit-HepPh} & \numprint{28093} & $4.60M$ & \numprint{0} & \numprint{2337} & \numprint{0} \\
         \texttt{ca-cit-HepTh} & \numprint{22908} & $2.67M$ & \numprint{0} & \numprint{219} & \numprint{0} \\
         \texttt{comm-linux-kernel-r} & \numprint{63399} & $1.03M$ & \numprint{0} & \numprint{839643} & \numprint{0} \\
         \texttt{copresence-InVS13} & \numprint{987} & \numprint{394247} & \numprint{0}& \numprint{20129} & \numprint{0} \\
         \texttt{copresence-InVS15} & \numprint{1870} & $1.28M$ & \numprint{0} & \numprint{21536} & \numprint{0} \\
         \texttt{copresence-LyonS} & \numprint{1922} & $6.59M$ & \numprint{0} & \numprint{3124} & \numprint{0} \\
         \texttt{copresence-SFHH} & \numprint{1924} & $1.42M$ & \numprint{0} & \numprint{3149} & \numprint{0} \\
         \texttt{copresence-Thiers} & \numprint{1894} & $18.6M$ & \numprint{0} & \numprint{8938} & \numprint{0} \\
         \texttt{digg-friends} & \numprint{279630} & $1.73M$ & \numprint{0} & $1.64M$ & \numprint{0} \\
         \texttt{edit-enwikibooks} & \numprint{134942} & $1.16M$ & \numprint{0} & $1.13M$ & \numprint{0} \\
         \texttt{fb-wosn-friends} & \numprint{63731} & $1.27M$ & \numprint{0} & \numprint{736675} & \numprint{0} \\
         \texttt{ia-contacts\_dublin} & \numprint{10972} & \numprint{415912} & \numprint{0} & \numprint{76944} & \numprint{0} \\
         \texttt{ia-enron-email-all} & \numprint{87273} & $1.13M$ & \numprint{0} & \numprint{214908} & \numprint{0} \\
         \texttt{ia-facebook-wall} & \numprint{46952} & \numprint{855542} & \numprint{0} & \numprint{847020} & \numprint{0} \\
         \texttt{ia-online-ads-c} & $15.3M$ & \numprint{133904} & \numprint{0} & \numprint{56565} & \numprint{0} \\
         \texttt{ia-prosper-loans} & \numprint{89269} & $3.39M$ & \numprint{0} & \numprint{1259} & \numprint{0} \\
         \texttt{ia-stackexch-user} & \numprint{545196} & $1.30M$ & \numprint{0} & \numprint{1154} & \numprint{1} \\
         \texttt{ia-sx-askubuntu-a2q} & \numprint{515273} & \numprint{257305} & \numprint{0} & \numprint{257096} & \numprint{0} \\
         \texttt{ia-sx-mathoverflow} & \numprint{88580} & \numprint{390441} & \numprint{0} & \numprint{390051} & \numprint{0} \\
         \texttt{ia-sx-superuser} & \numprint{567315} & $1.11M$ & \numprint{0} & $1.10M$ & \numprint{0} \\
         \texttt{ia-workplace-cts} & \numprint{987} & \numprint{9827} & \numprint{0} & \numprint{7104} & \numprint{0} \\
         \texttt{imdb} & \numprint{150545} & \numprint{296188} & \numprint{0} & \numprint{7104} & \numprint{0} \\
         \texttt{insecta-ant-colony1} & \numprint{113} & \numprint{111578} & \numprint{0} & \numprint{41} & \numprint{4285} \\
         \texttt{insecta-ant-colony2} & \numprint{131} & \numprint{139925} & \numprint{0} & \numprint{41} & \numprint{3742} \\
         \texttt{insecta-ant-colony3} & \numprint{160} & \numprint{241280} & \numprint{0} & \numprint{41} & \numprint{1539} \\
         \texttt{insecta-ant-colony4} & \numprint{102} & \numprint{81599} & \numprint{0} & \numprint{41} & \numprint{1838} \\
         \texttt{insecta-ant-colony5} & \numprint{152} & \numprint{194317} & \numprint{0} & \numprint{41} & \numprint{6671} \\
         \texttt{insecta-ant-colony6} & \numprint{164} & \numprint{247214} & \numprint{0} & \numprint{39} & \numprint{2177} \\
         \texttt{mammalia-voles-kcs} & \numprint{1218} & \numprint{4258} & \numprint{0} & \numprint{64} & \numprint{0} \\
         \texttt{SFHH-conf-sensor} & \numprint{1924} & \numprint{70261} & \numprint{0} & \numprint{3509} & \numprint{0} \\
         \texttt{soc-epinions-trust} & \numprint{131828} & \numprint{717129} & \numprint{123670} & \numprint{939} & \numprint{0} \\
         \texttt{soc-flickr-growth} & $2.30M$ & $33.1M$ & \numprint{0} & \numprint{134}& \numprint{0} \\
         \texttt{soc-wiki-elec} & \numprint{8297} & \numprint{83920} & \numprint{23093} & \numprint{101014} & \numprint{0} \\
         \texttt{soc-youtube-growth} & $3.22M$ & $12.2M$ & \numprint{0} & \numprint{203} & \numprint{0} \\
         \texttt{sx-stackoverflow} & $2.58M$ & \numprint{392515} & \numprint{0} & \numprint{384680} & \numprint{0} \\
         \hline
       \end{tabular}
       }
       \caption{Statistics of static and dynamic graphs used in experiments.\label{table:graphs}}
     \end{table*}

Family A consists of $28$ graphs obtained from the work of Henzinger~\etal on
the global minimum cut problem~\cite{henzinger2018practical}, originally from
the 10th DIMACS Implementation challenge~\cite{bader2013graph} and the
SuiteSparse Matrix Collection~\cite{davis2011university}. As finding \emph{any}
minimum cut is easy if the minimum cut is equal to the minimum degree, these
graphs are \emph{k-cores} of large social networks in which the minimum cut is
strictly smaller than the minimum degree. A \emph{k-core} of a graph is a
subgraph in which we iteratively remove all vertices of degree $<k$ until the
graph has a minimum degree of $k$. If the k-core has multiple connected
components, we use the largest of them. In this graph family, there are
generally only one or a few minimum cuts on each graph and the minimum cut is
strictly smaller than the minimum degree. In Table~\ref{table:graphs} we report
both the minimum cut $\lambda$ and the minimum degree $\delta$. In the dataset
there are $7$ different graphs with each $4$ different values of $\delta = k$
with $\lambda < \delta$ in every instance.

Family B consists of $65$ graphs obtained from the work of Henzinger~\etal on
finding all minimum cuts of a graph~\cite{henzinger2020finding}, originally from
from the 10th DIMACS Implementation challenge~\cite{bader2013graph}, the
SuiteSparse Matrix Collection~\cite{davis2011university} and the Walshaw Graph
Partitioning Archive~\cite{soper2004combined}. They represent a wide variety of
real-world graphs from different fields and applications. In contrast to Family
A, most graphs in this graph family have a large number of minimum cuts and
generally the minimum cut is equal to the minimum degree. As most real-world
graphs have some vertices of very low degree and therefore also a low minimum
cut, we also create instances with higher minimum cut by computing beforehand
the largest subgraph $G_x$ of $G$ that does not contain any minimum cut of size
$\lambda(G)$.

Family C consists of a set of $36$ dynamic graphs from Network Repository~\cite{nr-aaai15,nr-sigkdd16}. These graphs consist of a sequence of edge insertions and deletions. While edges are inserted and deleted, all vertices are static and remain in the graph for the whole time. Each edge update has an associated timestamp, a set of updates with the same timestamp is called a \emph{batch}. Most of the graphs in this dataset have multiple connected components, \ie their minimum cut $\lambda$ is $0$.

\subsection{Local Relabeling}
\label{ss:exp-relabel}

\begin{figure}[t!]
       \centering
       \includegraphics[width=\textwidth]{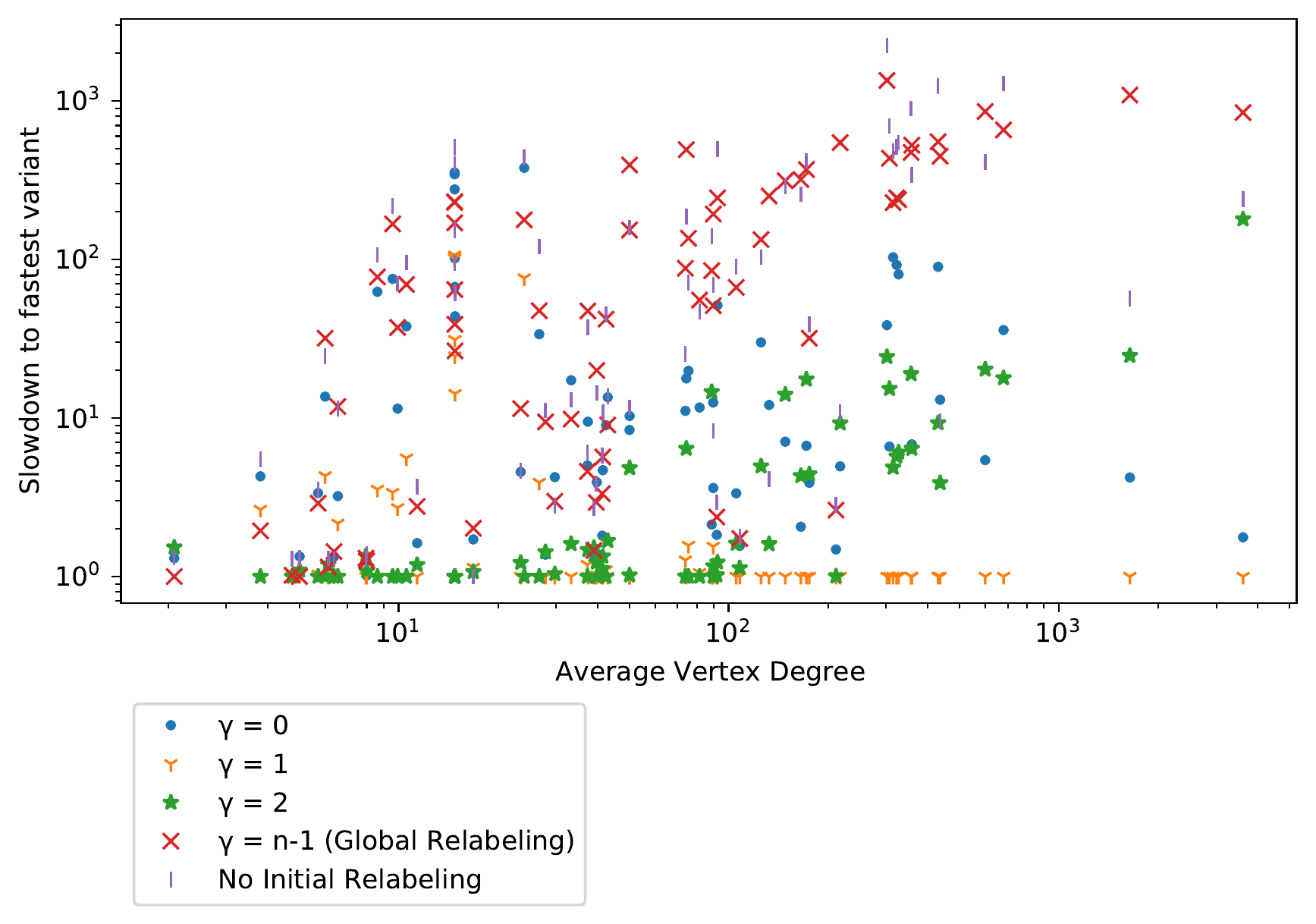}
       \caption{\label{fig:localrelabel}Effect of local relabeling depth on running time of delete operations.}
\end{figure}

In order to examine the effects of local relabeling with different values of
relabeling depth $\gamma$, we run experiments using all static graph instances
(Graph Family A and Graph Family B) from Table~\ref{table:graphs}, in which we
delete $1000$ random edges in random order. We report the total time spent
executing delete operations. We compare a total of $5$ variants, one that does
not run initial relabeling, three variants with relabeling depth $\gamma =
0,1,2$ and one variant which performs global relabeling in the initialization
process, \ie local relabeling with depth $\gamma = (n-1)$. Local relabeling with
$\gamma = 0$ is very similar to no relabeling, however the distance value of
non-sink vertices are set to $(\gamma + 1) = 1$ and not to $0$. 

In Figure~\ref{fig:localrelabel} we report the slowdown to the fastest variant
for all static graph instances from Table~\ref{table:graphs}. The x-axis shows
the average vertex degree for the instances. On most instances, the fastest
variant is local relabeling with $\gamma = 1$. Depending on the graph instance,
this variant spends $25-90\%$ of the deletion time in the initialization
(including initial relabeling). An increase in labeling depth increases the
initialization running time, but decreases the subsequent algorithm running
time. Thus we aim to find a labeling depth value that maintains some balance
between initial labeling and the subsequent algorithm execution. On some
instances, it is outperformed by local relabeling with $\gamma = 2$, which is
slower by a factor of $3-10$x on most instances, with $90-99\%$ of the total
running time spent in the initialization of the algorithm. We can see that in
instances with a higher average degree, local relabeling with $\gamma = 1$
performs better. This is an expected result, as the larger local relabeling is
more expensive in higher-average-degree graphs, as the $2$-neighborhood of a
vertex is much larger. Local relabeling with $\gamma = 2$ spends $90-99\%$ of
the total running time in initialization and initial relabeling. The same effect
is even more pronounced for the variant which performs global relabeling in
initialization. On vertices with a low average degree, we can perform global
relabeling in reasonable time, which makes the variant competitive with the
local relabeling variants. However, in high average degree instances, the
excessive running time of a global relabeling step causes the variant to have
slowdowns of up to $1000$x compared to the fastest variant. On all instances,
the vast majority of running time is spent in initialization including initial
global relabeling.

One graph family where local relabeling with $\gamma = 1$ performs badly are the
graph instances based on \texttt{auto}~\cite{karypis1998fast}, a 3D finite
element mesh graph. These graphs are rather sparse (average degree $~15$) and
planar. On these graphs, the value of the minimum cut divided by the average
degree is very large, as they do not contain any vertices of degree $1,2,3$.
Thus, the variants which perform only minor local relabeling do not guide the
flow enough and therefore the push-relabel algorithm takes a long time. On most other instances
in our test set, local relabeling with $\gamma = 1$ is enough to guide at least
$\lambda$ flow to the sink quickly.

Local relabeling with a relabeling depth $\gamma = 0$ (\ie we set the distance
of the sink to $0$, the source to $n$ and all other vertices to $1$) has a slowdown factor of
$10-100$x with only $1-10\%$ of the running time spent in the initialization.
The slowdown factor is generally increasing for larger values of the minimum cut
$\lambda$ and average degree, which indicates that ``the lack of guidance
towards the sink'' causes the algorithm to send flow to regions of the graph
that are far away from the source. For graphs with large minimum cut value
$\lambda$, the algorithm does not terminate early and needs to perform
a significant amount of push and relabel steps. In variants that perform more
relabeling at initialization, the flow is guided towards the sink by the
distance labels and the termination trigger is reached faster. The variant which
does not include any relabeling in the initialization phase has similar issues
with an even larger slowdown factor of $10-2000$x, as even flow that is already
incident to the sink does not necessarily flow straight to the sink.

On most instances, local relabeling with depth $\gamma=1$ performed best, as it
helps guide the flow towards the sink with additional work (compared to no
relabeling) only equal to the degree of the sink. While performing more
relabeling can increase this guidance even further, it comes with a trade-off in
additional time spent in the initialization. Note that this is not a general
observation for the push-relabel algorithm and can only be applied to our
application, in which the push-relabel algorithm is terminated early as soon as
$\lambda$ units of flow reach the sink vertex. Based on these experiments, we
use local relabeling with $\gamma = 1$ for edge deletions in all following
experiments.

\subsection{Dynamic Graphs}
\label{ss:exp-dynamic}

\begin{figure}[t!]
       \centering
       \includegraphics[width=\textwidth]{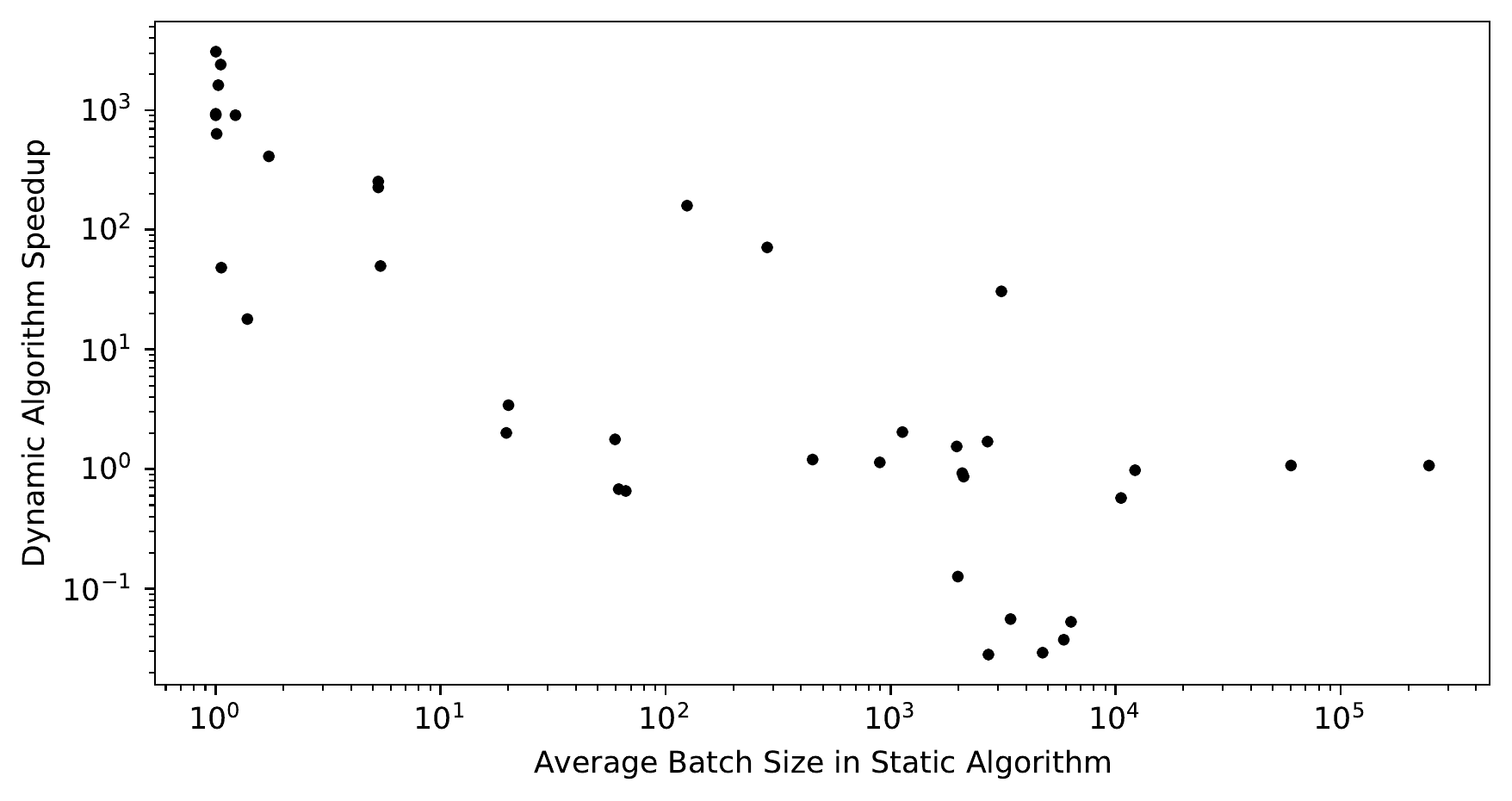}
       \caption{\label{fig:dyng}Speedup of Dynamic Algorithm.}
\end{figure}

Figure~\ref{fig:dyng} shows experimental results on the dynamic graph instances
from Graph Family C in  Table~\ref{table:graphs}. These graph instances are
mostly incremental with some being fully dynamic and most instances have
multiple connected components, \ie a minimum cut value $\lambda = 0$, even after
all insertions. On these incremental graphs with multiple connected components, our algorithm
behaves similar to a simple union-find based connected components algorithm that
for edge insertion checks whether the incident vertices already belong to the same
connected component and merges their connected components if they are different.

In this section we compare our dynamic minimum cut algorithm to
the static algorithm of Nagamochi~\etal\cite{nagamochi1994implementing}, which
has been shown to be one of the fastest sequential algorithms for the minimum
cut problem~\cite{Chekuri:1997:ESM:314161.314315,henzinger2018practical}. The
static algorithm performs the updates batch-wise, \ie the static algorithm is
not called inbetween multiple edge updates with equal timestamp. In
Figure~\ref{fig:dyng}, we show the dynamic speedup in comparison to the average
batch size. As expected, there is a large speedup factor of up to $1000$x for
graphs with small batch sizes; and the speedup decreases for increasing batch
sizes. The family of instances in which the dynamic algorithm is outperformed by
the static algorithm is the \texttt{insecta-ant-colony} graph
family~\cite{mersch2013tracking}. These graphs have a very high minimum cut
value and fewer batches than changes in the minimum cut value. Therefore, the
dynamic algorithm which updates on every edge insertion needs to recompute the
minimum cut cactus more often than the static algorithm is run and, thus, takes a longer time.

As these dynamic instances do not have sufficient diversity, we also perform experiments on static graphs in graph family B in which a subset of edges is inserted or removed dynamically. We report on this experiment in the following section.

\subsection{Random Insertions and Deletions from Static Graphs}
\label{ss:exp-static}

\begin{figure}[t!]
       \centering
       \includegraphics[width=\textwidth]{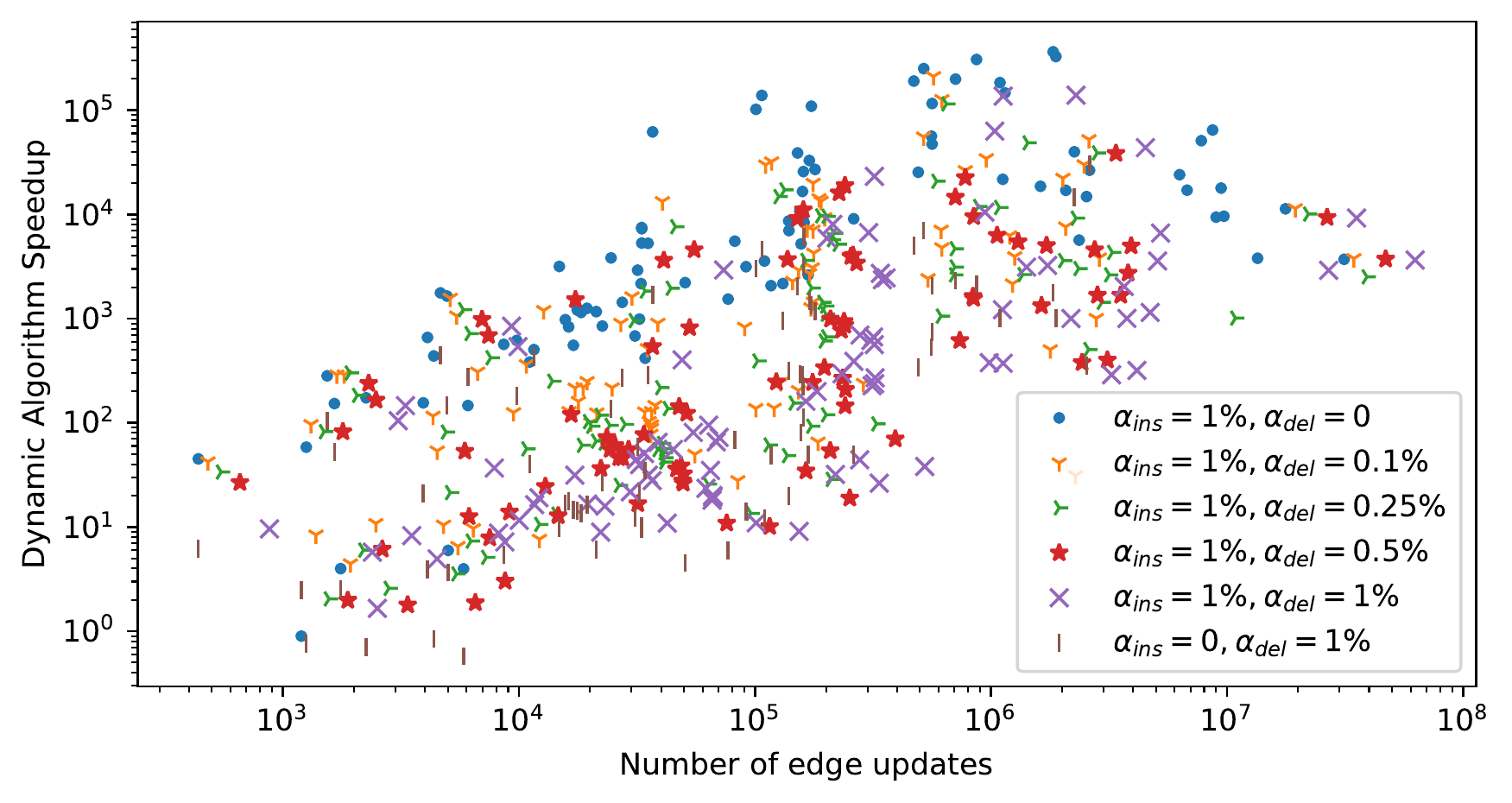}
       \caption{\label{fig:static}Speedup of Dynamic Algorithm on Random Insertions and Deletions from Static Graphs.}
\end{figure}

Figure~\ref{fig:static} shows results for dynamic edge insertions and deletions
from all graphs in Graph Family A and B from Table~\ref{table:graphs}. These
graphs are static, we create a dynamic problem from graph $G=(V,E,c)$ as
follows: let $\alpha_{ins} \in (0,1)$ and $\alpha_{del} \in (0,1)$ with
$\alpha_{ins} + \alpha_{del} < 1$ be the edge insertion and deletion rate. We
randomly select edge lists $E_{ins}$ and $E_{del}$ with $|E_{ins}| =
\alpha_{ins} \cdot |E|$, $|E_{del}| = \alpha_{del} \cdot |E|$ and $E_{ins} \cap
E_{del} = \emptyset$. For every vertex $v \in V$, we make sure that at least one
edge incident to $v$ is neither in $E_{ins}$ nor in $E_{del}$, so that the
minimum degree of $(V,E \backslash (E_{ins} \cap E_{del}), c)$ is strictly
greater than $0$ at any point in the update sequence.

We initialize the graph as $(V,E \backslash E_{ins}, c)$ and create a sequence of edge updates $E_u$ by concatenating $E_{ins}$ and $E_{del}$ and randomly shuffling the combined list. Then we perform edge updates one after another and compute the minimum cut - either statically using the algorithm of Nagamochi~\etal\cite{nagamochi1994implementing} or by performing an update in the dynamic algorithm - after every update. We report the total running time of either variant and give the speedup of the dynamic algorithm over the static algorithm as a function of the number of edge updates performed. For each graph we create problems with $\alpha_{ins} = 1\%$ and $\alpha_{del} \in \{0,0.1\%,0.25\%,0.5\%,1\%\}$; and additionally 
a decremental problem with $\alpha_{ins} = 0$ and $\alpha_{del} = 1\%$. We set the timeout for the static algorithm to $1$ hour, if the algorithm does not finish before timeout, we approximate the total running time of the static algorithm by performing $100$ or $1000$ updates in batch.

Dynamic edge insertions are generally much faster than edge deletions, as most real-world graphs have large sets that are not separated by any global minimum cut. When inserting an edge where both incident vertices are in the same set in $\mathcal{C}$, the edge insertion only requires two array accesses; if they are in different sets, it requires a breadth-first search on the relatively small cactus graph $\mathcal{C}$ and only if there are no minimum cuts remaining, an edge insertion requires a recomputation. In contrast to that, every edge deletion requires solving of a flow problem and therefore takes significantly more time in average. Therefore, the average speedup is larger on problems with a higher rate of edge insertions.

Generally, the speedup of the dynamic algorithm increases with larger problems and more edge updates. For larger graphs with $\geq 10^6$ edge updates, the average speedup is more than four orders of magnitude for instances with $\alpha_{del} = 0$ and still more than two orders of magnitude for large instances when $\alpha_{del} = \alpha_{ins} = 1\%$. Note that in this experiment, the number of edge updates is a function of the number of edges, thus instances with more updates directly correspond to graphs with more edges.

For decremental instances with $\alpha_{ins} = 0$, the speedup is generally lower, but still reaches multiple orders of magnitude in larger instances.

\subsubsection{Most Balanced Minimum Cut}

Henzinger~\etal\cite{henzinger2020finding} show that given the cactus graph $\mathcal{C}$ we can compute the most balanced minimum cut, \ie the minimum cut which has the highest number of vertices in the smaller partition, in $\Oh{n^*}$ time. In our algorithm for the dynamic minimum cut problem we also compute a cactus graph of minimum cuts, however this cactus graph does not necessarily contain all minimum cuts in $G$, as we do not introduce new minimum cuts added by edge deletions. 

We use the algorithm of Henzinger~\etal\cite{henzinger2020finding} to find the
most balanced minimum cut for all instances of Graph Family B every $1000$ edge
updates and compare it to the most balanced minimum cut found by our algorithm.
In instances that are not just decremental, in $97.3\%$ of all cases where there
is a nontrivial minimum cut (\ie smaller side contains multiple vertices), both
algorithms give the same result, \ie our algorithm can almost always output the most balanced
minimum cut. In the instances that are purely decremental, \ie $|E_{ins}| = 0$, we
only find the most balanced minimum cut in $25.4\%$ of cases where there is a
non-trivial minimum cut. This is the case because an increase of the minimum cut
prompts a full recomputation of a cactus graph that represents all (potentially
many) minimum cuts, thus also the most balanced minimum cut. Only if this cut in
particular is affected by an edge update, the dynamic algorithm ``loses'' it. In
the purely decremental case, the minimum cut value only decreases. Thus, the
dynamic algorithm only knows one or a few minimum cuts. All cuts that reach the
same value $\lambda$ in later edge deletions are not in $\mathcal{C}$, as we do
not add cuts of the same value to it. As these decremental instances do not
have any edge insertions that can increase the value of these cuts, there is
eventually a large set of minimum cuts of which the algorithm only knows a few. 
If maintaining a balanced minimum cut is a requirement, this can easily be
achieved by occasionally recomputing the entire cactus graph $\mathcal{C}$ from
scratch.

\subsection{Worst-case instances}
\label{ss:exp-worstcase}

\begin{figure}[t!]
       \centering
       \includegraphics[width=\textwidth]{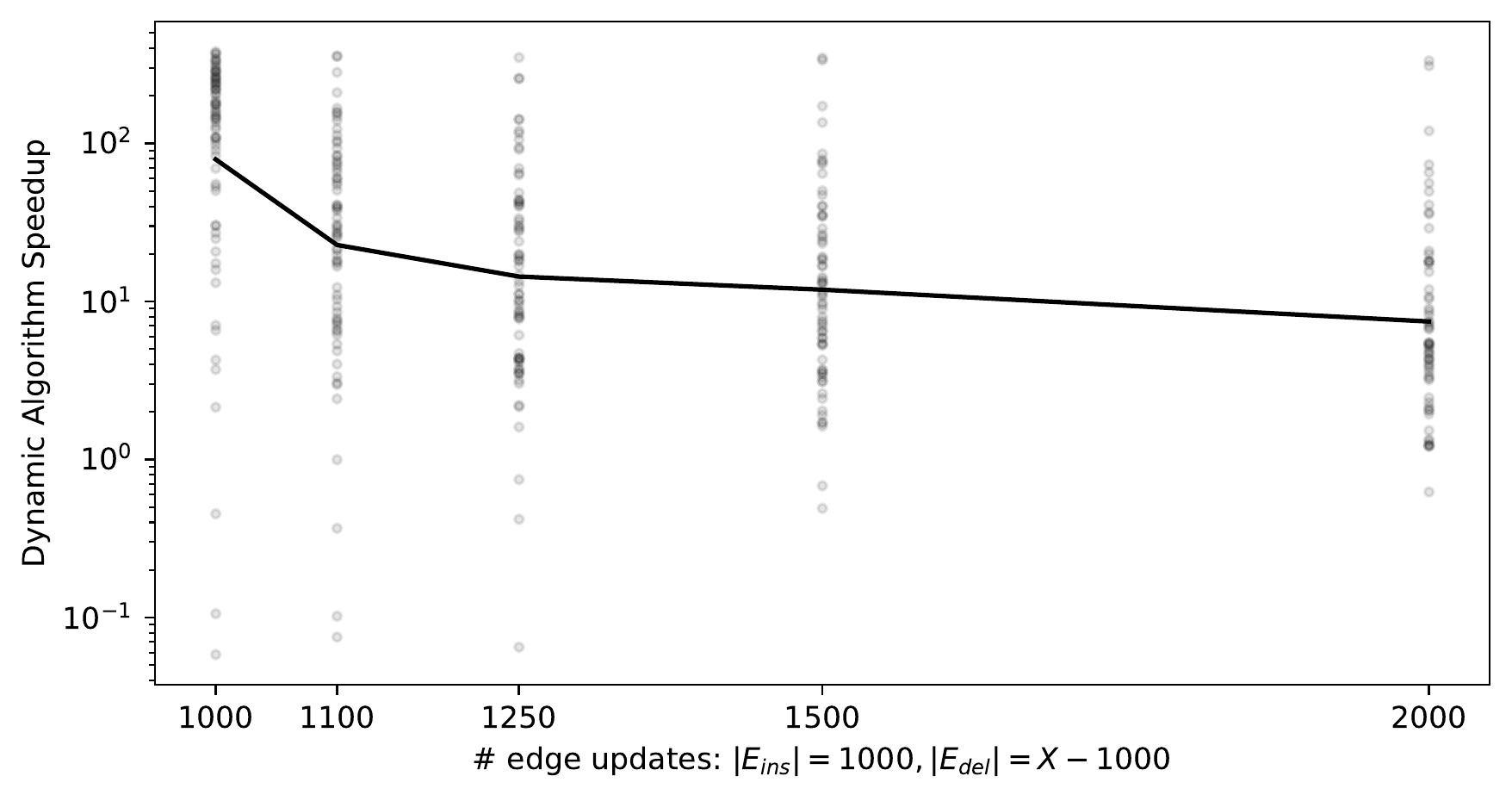}
       \caption{\label{fig:worstcase}Speedup of Dynamic Algorithm on Worst-case Insertions and Deletions from Static Graphs.}
\end{figure}

On random edge insertions, there is a high chance that the vertices incident to
the newly inserted edge were not separated by a minimum cut and therefore
require no update of the cactus graph $\mathcal{C}$. In this experiment we aim
to generate instances that aim to maximize the work performed by the dynamic
algorithm. We initialize the graph as $G=(V,E,c)$ and add random unit-weight
edges $e = (u,v)$ where $\Pi(u) \neq \Pi(v)$ for every newly added edge. Then we
randomly select $|E_{ins}| = 1000$ edges to add so that for each such edge $(u,v)$, $\Pi(u)
\neq \Pi(v)$ before inserting $(u,v)$, and select a subset $E_{del}
\subseteq E_{ins}$ to delete. For each graph we create $5$ problems, with
$|E_{del}| \in \{0,100,250,500,1000\}$. We randomly shuffle the edge updates
while making sure that an edge deletion is only performed after the respective
edge has been added to the graph, but still interspersing edge insertions and
deletions to create true worst-case instances for the dynamic algorithm, as each
edge deletion or insertion affects one or multiple minimum cuts in the graph.

Figure~\ref{fig:worstcase} shows the results of this experiment. Each low-alpha dot shows the speedup of the dynamic algorithm on a single problem, the black line gives the geometric mean speedup. As indicated in previous experiments, we can see that the average speedup decreases when the ratio of deletions is increased. However, even on these worst-case instances, the mean speedup factor is still $7.46$x for $|E_{ins}| = |E_{del}| = 1000$ up to $79.2$x for the purely incremental instances on instances where both algorithms finished before timeout at one hour. Similar to previous experiments, the speedup factor increases with the graph size.

On these problem instances we can see interesting effects. Especially in instances with $|E_{del}| = 500$ we can see many instances where the minimum cut fluctuates between two different values in more than half of all edge updates. As the larger of the values usually has a large cactus graph $\mathcal{C}$, this would result in expensive recomputation on almost every update. However, using the cactus caching technique detailed in Section~\ref{ss:cache} we can save this overhead and simply reuse the almost unchanged previous cactus graph. In some cases, this reduces the number of calls to the algorithm of Henzinger~\etal\cite{henzinger2020finding} by more than a factor of $10$.

We also find some instances where the static graph has few minimum cuts, but
there is a large set of cuts slightly larger than lambda. One such example are planar graphs
derived from Delaunay triangulation~\cite{lee1980two} that have a few vertices
of minimal degree near the edges of the triangulated object, but a large number
of vertices with a slightly larger degree. If we now add edges to increase the
degree of the minimum-degree vertices, the resulting graph has a huge number of
minimum cuts and computing all minimum cuts is significantly more expensive than
computing just a single minimum cut. In these instances the dynamic algorithm is
actually slower than rerunning the static algorithm on every edge update. The
dynamic algorithm is slower than the static algorithm in $3.9\%$ of the worst-case instances.

\section{Conclusion}

In this work, we presented the first implementation of a fully-dynamic algorithm that maintains the minimum cut of a graph under both edge insertions and deletions. Our algorithm combines ideas from the theoretical foundation with efficient and fine-tuned implementations to give an algorithm that outperforms static approaches by up to five orders of magnitude on large graphs. In our experiments, we show the performance of our algorithm on a wide variety of graph instances.

Future work includes maintaining all global minimum cuts also under edge deletions and employing shared-memory or distributed parallelism to further increase the performance of our algorithm.

\ifVLDB
\balance

\section{Acknowledgments}
The research leading to these results has received funding from the European Research Council under the European Community's Seventh Framework Programme (FP7/2007-2013) /ERC grant agreement No. 340506. Partially supported by DFG grant SCHU 2567/1-2.

\bibliographystyle{abbrv}
\else
\bibliographystyle{plainurl}
\fi{}
\bibliography{paper}

\begin{thebibliography}{10}

\bibitem{bader2013graph}
David~A Bader, Henning Meyerhenke, Peter Sanders, Christian Schulz, Andrea
  Kappes, and Dorothea Wagner.
\newblock Benchmarking for graph clustering and partitioning.
\newblock {\em Encyclopedia of Social Network Analysis and Mining}, pages
  73--82, 2014.

\bibitem{DBLP:conf/swat/BhardwajLS20}
Nalin Bhardwaj, Antonio~Molina Lovett, and Bryce Sandlund.
\newblock A simple algorithm for minimum cuts in near-linear time.
\newblock In Susanne Albers, editor, {\em 17th Scandinavian Symposium and
  Workshops on Algorithm Theory, {SWAT} 2020, June 22-24, 2020, T{\'{o}}rshavn,
  Faroe Islands}, volume 162 of {\em LIPIcs}, pages 12:1--12:18. Schloss
  Dagstuhl - Leibniz-Zentrum f{\"{u}}r Informatik, 2020.
\newblock \href {https://doi.org/10.4230/LIPIcs.SWAT.2020.12}
  {\path{doi:10.4230/LIPIcs.SWAT.2020.12}}.

\bibitem{Boecker2011}
Sebastian B{\"o}cker, Sebastian Briesemeister, and Gunnar~W. Klau.
\newblock Exact algorithms for cluster editing: Evaluation and experiments.
\newblock {\em Algorithmica}, 60(2):316--334, Jun 2011.
\newblock \href {https://doi.org/10.1007/s00453-009-9339-7}
  {\path{doi:10.1007/s00453-009-9339-7}}.

\bibitem{Chekuri:1997:ESM:314161.314315}
Chandra~S. Chekuri, Andrew~V. Goldberg, David~R. Karger, Matthew~S. Levine, and
  Cliff Stein.
\newblock Experimental study of minimum cut algorithms.
\newblock In {\em Proc. 8th Symp. on Discrete Algorithms (SODA '97)}, pages
  324--333. SIAM, 1997.

\bibitem{cherkassky1997implementing}
Boris~V Cherkassky and Andrew~V Goldberg.
\newblock On implementing the push—relabel method for the maximum flow
  problem.
\newblock {\em Algorithmica}, 19(4):390--410, 1997.

\bibitem{dantzig2003max}
G~Dantzig and Delbert~Ray Fulkerson.
\newblock On the max flow min cut theorem of networks.
\newblock {\em Linear inequalities and related systems}, 38:225--231, 2003.

\bibitem{davis2011university}
Timothy~A Davis and Yifan Hu.
\newblock The university of florida sparse matrix collection.
\newblock {\em ACM Trans. Mathematical Software (TOMS)}, 38(1):1, 2011.

\bibitem{demetrescu2010dynamic}
Camil Demetrescu, David Eppstein, Zvi Galil, and Giuseppe~F Italiano.
\newblock Dynamic graph algorithms.
\newblock In {\em Algorithms and theory of computation handbook: general
  concepts and techniques}, pages 9--9. 2010.

\bibitem{dinic1970algorithm}
Efim~A Dinic.
\newblock Algorithm for solution of a problem of maximum flow in networks with
  power estimation.
\newblock In {\em Soviet Math. Doklady}, volume~11, pages 1277--1280, 1970.

\bibitem{dinitz1993maintaining}
Yefim Dinitz.
\newblock Maintaining the 4-edge-connected components of a graph on-line.
\newblock In {\em [1993] The 2nd Israel Symposium on Theory and Computing
  Systems}, pages 88--97. IEEE, 1993.

\bibitem{eppstein1998dynamic}
David Eppstein, Zvi Galil, and Giuseppe~F Italiano.
\newblock {\em Dynamic graph algorithms}.
\newblock Citeseer, 1998.

\bibitem{ford1956maximal}
Lester~R. Ford and Delbert~R. Fulkerson.
\newblock Maximal flow through a network.
\newblock {\em Canadian Journal of Mathematics}, 8(3):399--404, 1956.

\bibitem{gawrychowski2020minimum}
Pawe{\l} Gawrychowski, Shay Mozes, and Oren Weimann.
\newblock Minimum cut in o (m log$^2$ n) time.
\newblock In {\em 47th International Colloquium on Automata, Languages, and
  Programming (ICALP 2020)}. Schloss Dagstuhl-Leibniz-Zentrum f{\"u}r
  Informatik, 2020.

\bibitem{ghaffari2019faster}
Mohsen Ghaffari, Krzysztof Nowicki, and Mikkel Thorup.
\newblock Faster algorithms for edge connectivity via random $2 $-out
  contractions.
\newblock {\em arXiv preprint arXiv:1909.00844}, 2019.

\bibitem{goldberg1988new}
Andrew~V. Goldberg and Robert~E. Tarjan.
\newblock A new approach to the maximum-flow problem.
\newblock {\em Journal of the ACM}, 35(4):921--940, 1988.

\bibitem{gomory1961multi}
Ralph~E. Gomory and Tien~Chung Hu.
\newblock Multi-terminal network flows.
\newblock {\em Journal of the Society for Industrial and Applied Mathematics},
  9(4):551--570, 1961.

\bibitem{goranci2018incremental}
Gramoz Goranci, Monika Henzinger, and Mikkel Thorup.
\newblock Incremental exact min-cut in polylogarithmic amortized update time.
\newblock {\em ACM Transactions on Algorithms (TALG)}, 14(2):1--21, 2018.

\bibitem{hao1992faster}
Jianxiu Hao and James~B. Orlin.
\newblock A faster algorithm for finding the minimum cut in a graph.
\newblock In {\em Proc. of the 3rd ACM-SIAM Symp. on Discrete Algorithms},
  pages 165--174. Society for Industrial and Applied Mathematics, 1992.

\bibitem{henzinger2018practical}
Monika Henzinger, Alexander Noe, Christian Schulz, and Darren Strash.
\newblock Practical minimum cut algorithms.
\newblock {\em {ACM} Journal of Experimental Algorithmics}, 23, 2018.
\newblock \href {https://doi.org/10.1145/3274662} {\path{doi:10.1145/3274662}}.

\bibitem{henzinger2020finding}
Monika Henzinger, Alexander Noe, Christian Schulz, and Darren Strash.
\newblock Finding all global minimum cuts in practice.
\newblock {\em arXiv preprint arXiv:2002.06948}, 2020.

\bibitem{henzinger2017local}
Monika Henzinger, Satish Rao, and Di~Wang.
\newblock Local flow partitioning for faster edge connectivity.
\newblock In {\em Proc. of the 28th ACM-SIAM Symp. on Discrete Algorithms},
  pages 1919--1938. SIAM, 2017.

\bibitem{henzinger1995approximating}
Monika~Rauch Henzinger.
\newblock Approximating minimum cuts under insertions.
\newblock In {\em International Colloquium on Automata, Languages, and
  Programming}, pages 280--291. Springer, 1995.

\bibitem{junger2000practical}
Michael J{\"u}nger, Giovanni Rinaldi, and Stefan Thienel.
\newblock Practical performance of efficient minimum cut algorithms.
\newblock {\em Algorithmica}, 26(1):172--195, 2000.

\bibitem{kant1993algorithms}
Goossen Kant.
\newblock {\em Algorithms for drawing planar graphs}.
\newblock PhD thesis, 1993.

\bibitem{karger2000minimum}
David~R Karger.
\newblock Minimum cuts in near-linear time.
\newblock {\em Journal of the ACM}, 47(1):46--76, 2000.

\bibitem{karger2001randomized}
David~R. Karger.
\newblock A randomized fully polynomial time approximation scheme for the
  all-terminal network reliability problem.
\newblock {\em SIAM Review}, 43(3):499--522, 2001.

\bibitem{karger1996new}
David~R Karger and Clifford Stein.
\newblock A new approach to the minimum cut problem.
\newblock {\em Journal of the ACM}, 43(4):601--640, 1996.

\bibitem{karypis1998fast}
George Karypis and Vipin Kumar.
\newblock A fast and high quality multilevel scheme for partitioning irregular
  graphs.
\newblock {\em SIAM Journal on scientific Computing}, 20(1):359--392, 1998.

\bibitem{karzanov1986efficient}
Alexander~V Karzanov and Eugeniy~A Timofeev.
\newblock Efficient algorithm for finding all minimal edge cuts of a
  nonoriented graph.
\newblock {\em Cybernetics and Systems Analysis}, 22(2):156--162, 1986.

\bibitem{kawarabayashi2015deterministic}
Ken-ichi Kawarabayashi and Mikkel Thorup.
\newblock Deterministic global minimum cut of a simple graph in near-linear
  time.
\newblock In {\em Proc. of the 47th ACM Symp. on Theory of Computing}, pages
  665--674. ACM, 2015.

\bibitem{krishnamurthy1984improved}
Balakrishnan Krishnamurthy.
\newblock An improved min-cut algorithm for partitioning {VLSI} networks.
\newblock {\em IEEE Trans. on Computers}, 33(5):438--446, 1984.

\bibitem{lkacki2011min}
Jakub {\L}{\k{a}}cki and Piotr Sankowski.
\newblock Min-cuts and shortest cycles in planar graphs in o (n loglogn) time.
\newblock In {\em European Symposium on Algorithms}, pages 155--166. Springer,
  2011.

\bibitem{lee1980two}
Der-Tsai Lee and Bruce~J Schachter.
\newblock Two algorithms for constructing a delaunay triangulation.
\newblock {\em International Journal of Computer \& Information Sciences},
  9(3):219--242, 1980.

\bibitem{li2020deterministic}
Jason Li.
\newblock Deterministic mincut in almost-linear time.
\newblock 2020.

\bibitem{li2020maxflows}
Jason Li and Debmalya Panigrahi.
\newblock Deterministic min-cut in poly-logarithmic max-flows.
\newblock 2020.

\bibitem{mersch2013tracking}
Danielle~P Mersch, Alessandro Crespi, and Laurent Keller.
\newblock Tracking individuals shows spatial fidelity is a key regulator of ant
  social organization.
\newblock {\em Science}, 340(6136):1090--1093, 2013.

\bibitem{DBLP:conf/stoc/MukhopadhyayN20}
Sagnik Mukhopadhyay and Danupon Nanongkai.
\newblock Weighted min-cut: sequential, cut-query, and streaming algorithms.
\newblock In Konstantin Makarychev, Yury Makarychev, Madhur Tulsiani, Gautam
  Kamath, and Julia Chuzhoy, editors, {\em Proccedings of the 52nd Annual {ACM}
  {SIGACT} Symposium on Theory of Computing, {STOC} 2020, Chicago, IL, USA,
  June 22-26, 2020}, pages 496--509. {ACM}, 2020.
\newblock \href {https://doi.org/10.1145/3357713.3384334}
  {\path{doi:10.1145/3357713.3384334}}.

\bibitem{nagamochi1992computing}
Hiroshi Nagamochi and Toshihide Ibaraki.
\newblock Computing edge-connectivity in multigraphs and capacitated graphs.
\newblock {\em SIAM Journal on Discrete Mathematics}, 5(1):54--66, 1992.

\bibitem{nagamochi1994canonical}
Hiroshi Nagamochi and Tiko Kameda.
\newblock Canonical cactus representation for minimum cuts.
\newblock {\em Japan Journal of Industrial and Applied Mathematics},
  11(3):343--361, 1994.

\bibitem{nagamochi1996constructing}
Hiroshi Nagamochi and Tiko Kameda.
\newblock Constructing cactus representation for all minimum cuts in an
  undirected network.
\newblock {\em Journal of the Operations Research Society of Japan},
  39(2):135--158, 1996.

\bibitem{nagamochi2000fast}
Hiroshi Nagamochi, Yoshitaka Nakao, and Toshihide Ibaraki.
\newblock A fast algorithm for cactus representations of minimum cuts.
\newblock {\em Japan journal of industrial and applied mathematics}, 17(2):245,
  2000.

\bibitem{nagamochi1994implementing}
Hiroshi Nagamochi, Tadashi Ono, and Toshihide Ibaraki.
\newblock Implementing an efficient minimum capacity cut algorithm.
\newblock {\em Math. Prog.}, 67(1):325--341, 1994.

\bibitem{naor1991representing}
Dalit Naor and Vijay~V Vazirani.
\newblock Representing and enumerating edge connectivity cuts in rnc.
\newblock In {\em Workshop on Algorithms and Data Structures}, pages 273--285.
  Springer, 1991.

\bibitem{padberg1991branch}
Manfred Padberg and Giovanni Rinaldi.
\newblock A branch-and-cut algorithm for the resolution of large-scale
  symmetric traveling salesman problems.
\newblock {\em SIAM Review}, 33(1):60--100, 1991.

\bibitem{picard1980structure}
Jean-Claude Picard and Maurice Queyranne.
\newblock On the structure of all minimum cuts in a network and applications.
\newblock In {\em Combinatorial Optimization II}, pages 8--16. Springer, 1980.

\bibitem{ramanathan1987counting}
Aparna Ramanathan and Charles~J. Colbourn.
\newblock Counting almost minimum cutsets with reliability applications.
\newblock {\em Mathematical Programming}, 39(3):253--261, 1987.

\bibitem{nr-aaai15}
Ryan~A. Rossi and Nesreen~K. Ahmed.
\newblock The network data repository with interactive graph analytics and
  visualization.
\newblock In {\em Proceedings of the Twenty-Ninth AAAI Conference on Artificial
  Intelligence}, 2015.
\newblock URL: \url{http://networkrepository.com}.

\bibitem{nr-sigkdd16}
Ryan~A. Rossi and Nesreen~K. Ahmed.
\newblock An interactive data repository with visual analytics.
\newblock {\em SIGKDD Explor.}, 17(2):37--41, 2016.
\newblock URL: \url{http://networkrepository.com}.

\bibitem{soper2004combined}
Alan~J Soper, Chris Walshaw, and Mark Cross.
\newblock A combined evolutionary search and multilevel optimisation approach
  to graph-partitioning.
\newblock {\em Journal of Global Optimization}, 29(2):225--241, 2004.

\bibitem{thorup2007fully}
Mikkel Thorup.
\newblock Fully-dynamic min-cut.
\newblock {\em Combinatorica}, 27(1):91--127, 2007.

\bibitem{zaki2016comprehensive}
Aya Zaki, Mahmoud Attia, Doaa Hegazy, and Safaa Amin.
\newblock Comprehensive survey on dynamic graph models.
\newblock {\em International Journal of Advanced Computer Science and
  Applications}, 7(2):573--582, 2016.

\end{thebibliography}


\end{document}
\endinput